\newcommand*{\Fversion}{}%

\documentclass[11pt,a4paper]{article}
\usepackage[margin=1in]{geometry}
\usepackage[T1]{fontenc}
\usepackage{xcolor}
\usepackage{graphicx}
\usepackage{amsthm}
\usepackage{authblk}
\usepackage{refcount}
\definecolor{linkcolor}{RGB}{170,0,0}
\definecolor{citecolor}{RGB}{44,160,46}
\definecolor{urlcolor}{RGB}{0,51,153}

\usepackage{doc}
\usepackage{enumitem}

\usepackage[textwidth=3cm,tickmarkheight=3pt,disable]{todonotes}

\definecolor{citeblue}{rgb}{0.1,0,.4}

\RequirePackage{xspace}
\usepackage{amsmath}
\usepackage{amssymb}
\usepackage{amstext}  
\usepackage[ruled,vlined]{algorithm2e}

\usepackage{bussproofs}

\newtheorem{observation}{Observation}

\DeclareMathOperator*{\opos}{\mathsf{OutPos}}
\DeclareMathOperator*{\npos}{\mathsf{PrePos}}

\newcommand{\arity}{\mathfrak{a}}
\newcommand{\solve}{\texttt{\textup{SolveByAlternation}}}
\DeclareMathOperator*{\mgu}{\mathrm{mgu}}

\newcommand{\class}[1]{\ensuremath{\mathsf{#1}}}
\newcommand{\Pe}{\class{P}}
\newcommand{\NP}{\class{NP}}
\newcommand{\PH}{\class{PH}}
\newcommand{\NPSPACE}{\class{NPSPACE}}
\newcommand{\PSPACE}{\class{PSPACE}}
\newcommand{\Sigmap}[1]{\class{\Sigma_{#1}^{\class p}}}
\newcommand{\Pip}[1]{\class{\Pi_{#1}^{\class p}}}
\newcommand{\EXPTIME}{\class{EXPTIME}}
\newcommand{\NEXPTIME}{\class{NEXPTIME}}

\newcommand{\NL}{\class{NL}}

\newcommand{\horn}{\textsf{Horn}\xspace}
\newcommand{\krom}{\textsf{Krom}\xspace}
\newcommand{\affine}{\textsf{Affine}\xspace}
\newcommand{\dethorn}{\textsf{Det}-\horn}
\newcommand{\bs}{Bernays-Sch\"onfinkel\xspace}
\newcommand{\fragment}{QEALM-fragment\xspace}

\DeclareMathOperator*{\cnf}{\phi}
\DeclareMathOperator*{\qbf}{\Phi}
\usepackage{tikz}
\usepackage[colorlinks=true,urlcolor=urlcolor,citecolor=citecolor,linkcolor=linkcolor]{hyperref}

\newtheorem{theorem}{Theorem}
\newtheorem{lemma}{Lemma}

\newtheorem{example}{Example}
\newtheorem{definition}{Definition}

\newcommand{\ecalculus}{$\forall$\textsf{Exp+Res}\xspace}

\title{Reintroducing the Second Player in EPR}

\author[1]{Leroy Chew}
\author[1]{Mikol\'a\v{s} Janota}
\author[2]{Miroslav Ol\v{s}\'{a}k}
\author[1]{Martin Suda}
\affil[1]{Czech Technical University in Prague, Prague, Czech Republic\protect\\
  \texttt{\{leroyche,mikolas.janota,martin.suda\}@cvut.cz}}
\affil[2]{Charles University, Prague, Czech Republic\protect\\
  \texttt{mirek@olsak.net}}
\date{}

\begin{document}

\maketitle

\begin{abstract}
	We present a PSPACE-complete fragment of first order logic  we call QEALM (QBF-like EPR by Alternating Level-ordered Miniscoping). Unlike other well-known fragments it is closer to QBF (Quantified Boolean Formulas) in nature because it has game-like alternations and can be restricted to complete problems for the polynomial hierarchy. QEALM is a natural first-order analogue to QBF just as EPR (Effectively Propositional logic) is a first-order analogue to Dependency QBF. 
	
	We prove the PSPACE and Polynomial Hierarchy completeness theorems of QEALM along with several other nice properties such as closure under Robinson's resolution and retained hardness after intersection with Schaefer fragments. 
	We identify problems in the TPTP library that fall into this fragment and their level in the polynomial hierarchy.
\end{abstract}

\section{Introduction}

In automated reasoning, translations to formal logic are an effective tool for efficient solving, trusted proving and a solid theoretical backbone. 
Whether to choose a propositional or first-order logic encoding can depend on the problem.
In propositional logic all objects are 
binary and is often sufficient for computational tasks.
But what is possible may not always be practical, and First-Order Logic (FOL) offers quantifiers, variables and predicates. 
Satisfiability for propositional logic is \NP-complete~\cite{CookLevin71}, whereas unsatisfiability for FOL is only semi-decidable.
And the computational complexity of the problem can often dictate the best choice for encoding.

If we fix our focus to \NEXPTIME,  we can find complete problems using either propositional or first-order ideas. 
The \bs fragment is nearly a century old~\cite{ram30} and defines the FOL fragment that is free of function symbols and existential quantification.
The \bs fragment is also known as Effectively Propositional (EPR), because of an exponential reduction to propositional logic, 
hence its satisfiability is \NEXPTIME\ for which it is also complete.
For propositional logic, we obtain \NEXPTIME-completeness by first extending it with Boolean quantifiers. This gives  the \PSPACE-complete Quantified Boolean Formulas (QBF). To get \NEXPTIME-completeness, one allows Henkin quantification \cite{Henkin61}, which relaxes the quantifier ordering to non-transitivity which defines the Dependency QBF (DQBF) problem. 
This means, for \NEXPTIME\ problems, we have a choice of a Boolean or a first-order encoding.

In this paper 
we aim to have the same dichotomy for \PSPACE. We can already extend propositional logic to the  \PSPACE-complete QBF, so  
our aim is to define an analogous \PSPACE-complete fragment of FOL .
Our proposed fragment (eventually called QEALM) will share some similarities to QBF, but has the advantage of being able to handle non-Boolean elements.
Because QEALM is a sub-fragment of \bs, (paralleling how QBF extends to DQBF) it means that the game evaluation is not done in the quantifier prefix, as all variables are universal.
Instead, the ordered list of arguments for each predicate symbol, takes the place of a prefix. To achieve this in \PSPACE\ we have restricted QEALM to CNF problems, where each clause must have the QEALM condition (Definition~\ref{def:fragment}). 
Therefore the union and intersection of QEALM problems is also a QEALM problem. This is much like how the Schaefer fragments
\horn, \krom, \affine  \cite{Schaefer1978}  operate.
Unlike Schaefer classes but similar to QBF, the fragment can be easily restricted to fixed alternations to get complete problems for the polynomial hierarchy.

\subsection{Related Work}

We note several connections between our work and other work in first-order logic.
Our idea is similar to the \emph{fluted fragment} of first-order logic \cite{quine1969limits}, which gains decidability 
by requiring the order of quantification to match the order of variables as arguments in predicates.
As we stay in \bs, our work removes the quantifier order entirely from the classification.

Our work is directly inspired by translations from QBF to FOL, in particular translations to \bs directly (\cite{SLB12}).
We describe this translation in Section~\ref{sec:qbf2epr}.
In addition, our detection tool uncovered that our new fragment is not only an extension of a FOL translation of QBF, but 
can be understood 
as extensions of FOL translations from other \PSPACE\ logics, in particular 
\emph{basic path logic} \cite{DBLP:journals/jar/Schmidt99}, a fragment of FOL introduced as a target of the optimised functional translation \cite{DBLP:journals/logcom/OhlbachS97} of certain modal logics. 
While the motivations behind that work are different, this is an interesting connection, that we intend to explore in future work.

It should be noted that 
\PSPACE\ has been studied previously in a FOL setting. There are a number of known \PSPACE-complete Schaefer fragments of EPR as detailed in Table~\ref{fig:complexity}.
There are two common kinds of \PSPACE-complete problems: problems from unbounded alternation and problems from reachability.
QBF is naturally of the former kind, whereas some well-known \PSPACE\ problems such as planning are of the latter.
\krom and \dethorn more naturally capture reachability and planning as they allow us to encode transition rules
and start and target states as clauses, where unsatisfiability equals reachability. %
Morally, we can informally think of EPR-\krom and EPR-\dethorn in \textsf{CoNPSPACE} more so than \PSPACE,
as these were \textsf{CoNL}-complete propositional problems (\textsf{NL}=\textsf{CoNL}) before we took them to first order and used Savitch's Theorem~\cite{Sav70} for the quadratic transformation to QBF. 
Thus we can justify the advantages of our fragment over \krom and \dethorn.
A more concrete example is that it is not clear how to restrict  EPR-\krom and EPR-\dethorn
to complete classes of the polynomial hierarchy \PH~\cite{Stockmeyer76}, as QBF naturally does.

Much like \horn can be generalised into HORN-renamable, based on the symmetry of true and false. Problems that are not QEALM can become so under the application of a symmetry. 
\subsection{Organisation}
This paper first presents the QEALM fragment (Definition~\ref{def:fragment}) and proves its \PSPACE-completeness (Theorem~\ref{thm:completeness}). Naturally, we show both inclusion (Lemma~\ref{lem:memory}) and hardness in \PSPACE\ (Lemma~\ref{lem:hardness}). 
We show that the intersection of QEALM with Schaefer fragments retains \PSPACE-hardness (Theorem~\ref{thm:intersection}). 
In contrast to thes Schaefer fragments, our fragment QEALM will have a game-like alternation, where we can restrict the number of alternations to specific levels of the polynomial hierarchy. Thus we show completeness for \PH\ as well (Theorem~\ref{thm:phcompleteness}). We additionally show its closure under Robinson's resolution (Theorem~\ref{thm:closure}), a property that separates it from QBF and previous translations from QBF into FOL.%
%

\begin{table}[h]
	\centering
	\begin{tabular}{|l|r|r|}
		\hline
		Fragment & Prop. complexity& EPR complexity\\
		\hline
		3CNF & \NP-complete \cite{CookLevin71}& \NEXPTIME-complete \cite{lewis1980}\\
		\horn & \Pe-complete \cite{Jones1976} & \EXPTIME-complete \cite{Plaisted1984}\\
		\dethorn & \NL-complete \cite{JLL1976} &\PSPACE-complete \cite{Plaisted1984}\\
		\krom (2-CNF)& \NL-complete \cite{JLL1976}& \PSPACE-complete \cite{Plaisted1984}\\
		\affine & \textsf{Mod$_2$-L}-complete \cite{BDHM92} & \textsf{Mod$_2$-}\PSPACE  \\
		\hline
	\end{tabular}
	\caption{Complexity of EPR fragments\label{fig:complexity}.} %
\end{table}

	\section{Preliminaries}

We assume familiarity with classical first-order logic~\cite{smullyan}.
In particular, terms are built from
variables and function symbols; each function and predicate
symbol $f$ has a fixed arity $\arity(f)$. Function symbols with arity 0 are
called constants. A \emph{literal} $l$ is an \emph{atom}, e.g.,\ $l=r(t_1 \dots
t_{\arity(r)})$, or a negation of an atom $l=\lnot r(t_1 \dots t_{\arity(r)})$. A
\emph{clause} is a disjunction of literals, e.g.,\ $C= l_1 \vee l_2 \vee l_3$, and
a \emph{CNF} (conjunctive normal form) is a conjunction of clauses, e.g.,\ $C_1
\wedge C_2 \wedge C_3$, where variables occurring in the clauses are implicitly
assumed to be universally quantified.
The quantifiers $\forall$ and $\exists$ bind a variable
$x$ and act on a formula $\phi$, i.e.,\ $\forall x \phi$. A sentence is a formula
where all variables are bound by some quantifier. A sentence is in
\emph{prenex} form if no subformula is defined by applying a connective
($\lnot$, $\wedge$, $\vee$) to a quantified subformula.

The Bernays-Schoenfinkel fragment comprises sentences of the form
$\exists^*\forall^*\phi$, where $\phi$ is quantifier-free and terms are only
variables. Equivalently, via Skolemization, it is of the form $\forall^*\phi$,
where terms are only constants and variables, i.e.,\ there are no function
applications but predicates are allowed,  which is the view we adopt in this
paper. Moreover, we can also assume $\phi$ is in CNF, as the standard
clausification procedures preserve the fragment.

An interpretation is a tuple $(D, I)$ where $D$ is a non-empty set, called the
domain, and $I$ prescribes values for function and predicate symbols.
Whenever convenient, we view an interpretation for a predicate as a function to
$\{0, 1\}$. In \bs, there are no function symbols of arity greater than zero, and therefore the Herbrand
universe is finite---it is the set of constant symbols that appear in the
formula; this shows the decidability of \bs via the Herbrand theorem. Throughout
the paper, we therefore assume that $D$ is in fact the set of constants of the
formula in question.

As usual, a \emph{substitution} $\sigma$ is a finite mapping from variables to terms
which, when applied to a formula $\phi$, gives rise to the formula $\phi\sigma$
in which each free occurrence of a variable $x$ from the domain of $\sigma$ gets (simultaneously) replaced
by the corresponding term $t=\sigma(x)$ from $\sigma$'s range.
We refer to the resulting formula $\phi\sigma$ as an \emph{instance} of $\phi$,
and to the operation (and where confusion cannot arise, also to the substitution itself) as \emph{instantiation}.

\paragraph{Robinson's Resolution}
Given a CNF formula $\phi$, Robinson's Resolution construed as a proof system consist of the Axiom rule
\begin{prooftree}
	\AxiomC{}
	\RightLabel{(Axiom)}
	\UnaryInfC{$C$}
\end{prooftree}
where $C$ is a clause from $\phi$, and two derivation rules, Resolution and Factoring, defined as follows:
\begin{prooftree}
	\AxiomC{$C_1 \vee \neg p(t_1 \dots t_n)$}
	\AxiomC{$C_2 \vee p(s_1 \dots s_n)$}
	\RightLabel{(Resolution)}
	\BinaryInfC{$C_1 \sigma\vee C_2 \sigma$}
\end{prooftree}
where the atoms $p(t_1 \dots t_n)$ and $p(s_1 \dots s_n)$ are unifiable and  $\sigma=\mgu(p(t_1 \dots t_n),p(s_1 \dots s_n) )$ is their most general unifier. (The premises $C_1$ and $C_2$ are assumed to not share variables. This can be achieved by first renaming the premises apart, if necessary.)
\begin{prooftree}
	\AxiomC{$C \vee p(t_1 \dots t_n) \vee p(s_1 \dots s_n)$}
	\RightLabel{(Factoring)}
	\UnaryInfC{$C\sigma\vee p(t_1 \dots t_n)\sigma$}
\end{prooftree}
where $\sigma=\mgu(p(t_1 \dots t_n),p(s_1 \dots s_n) )$.

Propositional resolution is a simplification of Robinson's Resolution. Assuming clauses are sets of literals, we do not need a factoring rule, and the resolution rule does not need to consider unification, the pivot atoms must match exactly. We can use propositional resolution on FOL problems that are fully ground as unification is no longer possible.  
\subsection{Observations on First-Order Logic} 

The satisfiability problem for EPR requires us to find at least one suitable interpretation of the predicate symbols. Therefore, unsatisfiability requires that there are no suitable interpretations, in other words, for every possible interpretation there is some assignment to the universal variables that makes the formula false.
In the general case, the falsifying universal assignment depends on and \textit{after} the chosen interpretation of the predicate symbols and---when considering uninterpreted predicate symbols---several different assignments might be needed to demonstrate unsatisfiability.
However, there are certain scenarios where a common universal assignment falsifies all interpretations of the predicate symbols. Consider the following:
\begin{example}\label{ex:terms}
	Consider a \bs formula, with constants $c_1$ and $c_2$:%
	$$
	\begin{aligned}
		\forall x, y &\quad & \big(p(x)\vee \lnot q(x,c_1)\big) \quad &\wedge& \big(\lnot p(x) \vee \lnot q(x,c_2)\big)\\
		\big(\lnot r(x)\big) \quad& \wedge& \big( q(c_1,y)\vee r(c_1) \big)	\quad&\wedge& \big(q(c_2,c_1)\big)
	\end{aligned}
	$$
	and a propositional resolution refutation over ground instances:
	
	{\centering\begin{tikzpicture}
		\node(l1) at (-4,4){$q(c_1,c_2)\vee r(c_1)$};
		\node(c1) at (0,4){$\lnot r(c_1)$};
		\node(r1) at (4,4){$q(c_1,c_1)\vee r(c_1)$};
		\node(l2) at (-1,3){$q(c_1,c_2)$};
		\node(r2) at (1,3){$q(c_1,c_1)$};
		\node(l3) at (-4,3){$\lnot p(c_1)\vee \lnot q(c_1,c_2) $};
		\node(r3) at (4,3){$p(c_1)\vee \lnot q(c_1,c_1) $};
		\node(l4) at (-2,2){$\lnot p(c_1)$};
		\node(r4) at (2,2){$p(c_1)$};
		\node(c5) at (0,1){$\bot$};
		
		\draw(l1)--(l2)--(c1);
		\draw(r1)--(r2)--(c1);
		\draw(l3)--(l4)--(l2);
		\draw(r3)--(r4)--(r2);
		\draw(l4)--(c5)--(r4);
		
	\end{tikzpicture}\par}
	
	Our main observation is that the first argument in every literal is always the same ground term, in this case $c_1$. This is true, due in part to resolution steps requiring matching arguments on the pivots. Generally the pivot arguments do not match with the other literals in the clause,   but in this formula we have insisted on it in our axioms and it persists throughout the proof. Despite $q(c_2,c_1)$ being an axiom, it ends up not being used in the proof because it is impossible to resolve it with any clause with a mismatch in the first argument. 

	In contrast, the second argument of $q$ not only can contain $c_1$ and $c_2$ in the proof, but is actually required to use both for the refutation to go through. This is because it can be joined by $p(x)$ which does not necessarily share the same argument with the second argument of $q$ in $p(x)\vee \lnot q(x,c_1) $ and $\lnot p(x) \vee \lnot q(x,c_2)$. In other words, because there is some connecting clause that mixes elements, both can be connected in the proof.
\end{example}

\begin{observation}\label{lem:semantic}
	
	Suppose we have a first-order problem (without equality) written as a CNF where we have the following restrictions:
	\begin{itemize}
		\item  only allow predicate symbols which have an arity of at least $1$,
		\item \emph{within each clause $C$} among all literals throughout the clause, the first argument must be the same term $t_C$  (whether it be a variable or other term).
	\end{itemize}

	If the above conditions are met and the formula is unsatisfiable then there is at least one global ground term $t$,
	such that the following transformation produces another unsatisfiable problem
	\begin{itemize}
		\item If possible, $C\in \phi$ is instantiated by the minimal substitution such that the first argument $t_C$ in every literal becomes $t$.
		\item If not possible ($t_C$ cannot unify with $t$), we remove $C$ entirely from the CNF. 
	\end{itemize}

\end{observation}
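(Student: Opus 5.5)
The plan is to argue semantically, by contraposition. We assume that for every ground term $t$ of the Herbrand universe the transformed CNF $\phi_t$ is satisfiable. From the chosen models we glue together a single Herbrand model of $\phi$, which contradicts the unsatisfiability of $\phi$. The key point is that the first-argument condition splits the Herbrand base into disjoint \emph{slices}. For a ground term $t$, the slice $B_t$ consists of all ground atoms whose first argument is $t$. Every predicate has arity at least $1$, so every ground atom lies in exactly one slice.

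First we check that each ground instance $C\tau$ of a clause $C\in\phi$ has all its literals in one slice, namely $B_{t_C\tau}$. We also check that this ground instance is an instance of the clause of $\phi_{t}$ obtained from $C$, where $t=t_C\tau$. Indeed, $t_C\tau=t$ means that $t_C$ unifies with $t$, so $C$ is not removed. Its minimal instantiation $C\sigma$ (with $\sigma$ the mgu of $t_C$ and the ground term $t$) has $C\tau$ as an instance, because $\tau$ factors through $\sigma$. In the function-free \bs setting, $t_C$ is either a variable or a constant, so this is immediate. In the general case it is the standard mgu property.

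Next, for each ground term $t$ we fix a Herbrand model $I_t$ of $\phi_t$. We define a global Herbrand interpretation $I$ by setting $I(A)=I_t(A)$ for every ground atom $A\in B_t$. This is well defined because the slices are disjoint. Now take an arbitrary ground instance $C\tau$ of a clause of $\phi$, and let $t=t_C\tau$. By the previous step, $C\tau$ is a ground instance of a clause of $\phi_t$, so $I_t\models C\tau$. Every atom of $C\tau$ lies in $B_t$, where $I$ agrees with $I_t$, so $I\models C\tau$. Hence $I$ satisfies all ground instances of $\phi$, and by Herbrand's theorem $\phi$ is satisfiable, a contradiction. So some $\phi_t$ is unsatisfiable, which is the claim.

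The main obstacle is the bookkeeping in the first step. We must make sure the minimal substitution is genuinely general, so that no ground instance in slice $t$ is lost. We must also make sure that the Herbrand universe used for $\phi_t$ can be taken to be the same as that of $\phi$; if constants disappear with removed clauses, we evaluate $\phi_t$ over the original domain $D$, which is harmless since it is a universal sentence. A second route, which I would mention as a sanity check, is syntactic and follows Example~\ref{ex:terms}. Take a ground propositional resolution refutation of $\phi$. Since pivots must match exactly, and each clause lives in a single slice, connectedness forces every clause used in the refutation to lie in one slice $B_t$. The refutation is then a refutation of $\phi_t$. The semantic gluing argument avoids having to argue about which axioms are actually used, so it is the one I would write out.
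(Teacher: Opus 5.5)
Your proof is correct, but it takes a different route from the paper's. The paper's proof is essentially your ``sanity check'' route. It takes a ground propositional resolution refutation of $\phi$ (via completeness of ground resolution) and trims it to the ancestors of the empty clause. It then shows by induction on the derivation that all its clauses share a single ground first argument $t$, because pivots must match exactly and every atom has at least one argument. The axioms used are then instances of clauses kept in $\phi_t$, and soundness finishes.

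Your semantic gluing over the disjoint slices $B_t$ needs only two ingredients: Herbrand's theorem in its ``no Herbrand model'' form, and the fact that the mgu of $t_C$ with a ground term is a matching. This buys several things:
\begin{itemize}
\item It avoids reasoning about which axioms a refutation uses.
\item When function symbols are present, it avoids the compactness implicit in extracting a finite refutation.
\item It makes the role of the arity-$\geq 1$ hypothesis transparent: that hypothesis is exactly what makes the slices partition the Herbrand base.
\item It yields the full equivalence ``$\phi$ is satisfiable iff every $\phi_t$ is''; the converse is trivial, since $\phi_t$ consists of instances of clauses of $\phi$. This is the form actually needed for the universal search in Algorithm~\ref{alg:solvefrag}.
\end{itemize}

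The paper's route gives slightly more information in return: a refutation lives entirely in one slice. This matches the intuition of Example~\ref{ex:terms} and the resolution-based argument in Lemma~\ref{lem:correctness}.

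Your handling of the Herbrand universe for $\phi_t$ is right: a satisfiable universal sentence has a Herbrand model over any larger signature. The only loose end is the degenerate case where $\phi$ contains the empty clause, for which $t_C$ is undefined. There every $\phi_t$ keeps that clause, so the claim is immediate.
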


In other words, having an argument that matches everywhere within clauses, the CNF can be simplified (with the choice of the right ground term) to an equisatisfiable case
in which the argument matches everywhere within \emph{and between} clauses via instantiation and pruning. Here we mention the \textit{first} argument of each literal, but this can be generalised symmetrically under a rearrangement of the arguments of the predicate symbols.

While this observation is semantic in nature, our best understanding of its correctness comes from using the completeness and soundness of Resolution, as every unsatisfiable CNF has a ground refutation, and we can use that to prove the existence of $t$.
This observation does not require us to be in \bs, but we must forbid equality, because the conditions will automatically fail for symmetry and transitivity.

\begin{proof}
	We can use the fact that for unsatisfiable first-order clausal problem, grounding a formula and using propositional resolution on the ground clauses is a sound and complete proof process.
	We further use the fact about trimming Resolution proofs, that proofs follow a DAG-like structure and any clauses not connected to a path to the empty clause can be trimmed.
	
	We will prove by induction on the number of lines in a ground resolution derivation that the first argument of every literal is the ground term $t$.
	
	\noindent\textbf{Base Case: } If we have a derivation of length 1, it just contains a single grounded axiom clause. Because of our  condition, the first argument for every literal is the same ground term.
	
	\noindent\textbf{Inductive Step: }	
	If we have a derivation of clause $C$ after $n$ lines, we have two subderivations of length $<n$ that derive the two clauses that we resolve. So we can assume by induction hypothesis each of the subderivations have a common first argument ground term $t$ within each subderivation. But this must also be common between the two subderivations otherwise the two pivot literals (that contain at least one argument) cannot resolve.

	 Therefore, by induction, all axiom ancestors of the empty clause must have the same term in the first argument for every clause. 
	 By the soundness of propositional Resolution this means there is some ground term that can replace the variable in first argument everywhere via instantiation and for us to still have a contradiction.
	We can then remove all clauses that are not required in the proof and it will still be a contradiction.
\end{proof}

\subsection{Quantified Boolean Formulas (QBF) and Dependency Quantified Boolean Formulas (DQBF)}
The Boolean quantifiers $\forall_B$ and $\exists_B$ can be used to extend propositional logic. They are not to be used in the same way as quantifiers in first-order logic. Instead we quantify a propositional atom symbol that appears without Boolean quantification in a formula (free) over the domain $\{\top, \bot\}$, where  $\top, \bot$ are true and false in this domain (we use the subscript to distinguish it from the symbols in FOL).
These formulas are called QBFs (quantified Boolean formulas) and have the following meaning:
$$ \forall_B x \phi := \phi[\bot/x] \wedge \phi[\top/x]\quad  \exists_B x \phi:= \phi[\bot/x] \vee \phi[\top/x]$$

Like in other languages, algorithms exist to prenex a QBF. Additionally QBF satisfiability can be understood as the truth of the QBF sentence where all free propositional atoms are placed into an outer existential quantifier.
This problem is \PSPACE-complete. In fact, it is understood as canonically \PSPACE-complete. \PSPACE\xspace is a subset of \NEXPTIME, therefore 
more 
solving techniques and reductions are available for QBF satisfiability than with \bs.

One observation about QBFs is that they can be thought of as semantic games between a universal and an existential player.
In such a game, the turns are given by the prefix from left to right.
Whoever's quantifier appears next gets to set a Boolean value ($\{\bot ,\top\}$) to replace the quantified atom.
At the end of the game, the universal player wins if the resulting expression evaluates to $\bot$ and the existential player wins if the resulting expression evaluates to $\top$.
A QBF is true if and only if the existential player has a winning strategy and false if and only if the universal player has winning strategy.
These strategies can be thought of as the Skolem and Herbrand functions, respectively, which are now Boolean functions.
A QBF is true (false) if and only if there are Skolem (Herbrand) functions for the existentially (universally) quantified atoms
and the formula becomes a tautology (contradiction) after substituting in the Skolem (Herbrand) functions.
The quantifier order matters as it affects the size of the domain of the Skolem (Herbrand) functions.

\subsubsection{DQBF and the connection to EPR}\label{sec:qbf2epr}

One extension of prenex QBF that is relevant to this paper is (S-form) DQBFs (Dependency QBFs). In DQBF we once again have a propositional problem with a quantifier prefix, however there is no intrinsic order of the prefix. Instead, S-form DQBFs use Henkin quantifiers for $\exists_B$ that explicitly specify the universal atom symbols that the quantified existential atom depends on. 
The \emph{dependency set} $D_e$ denotes the set of universal variables that existential variable $e$ depends on.
A DQBF is still true if and only if it is a tautology after Skolemisation of the existential variables and every QBF can be written as a DQBF. But the linear two player game can no longer apply in general, so it is difficult to make a simple branching procedure. DQBF turns out to be \NEXPTIME-complete and equivalent to the \bs fragment. %
Let us now describe a transformation \cite{SLB12} from (D)QBF to EPR that uses Skolemisation.
Translating DQBF truth ($\top$) and falsity ($\bot$) as FOL constants \textsf{true} and \textsf{false}, we can express the Skolemisation using a first-order predicate symbol $p$ and function symbols for Skolem functions:
\begin{align*}
	\textlbrackdbl \top \textrbrackdbl_p = p(\textsf{true}), \quad & \textlbrackdbl e  \textrbrackdbl_p= p(f_e(D_e)),\quad& \textlbrackdbl u  \textrbrackdbl_p= p(u),\quad & \textlbrackdbl \phi_1 \lor \phi_2   \textrbrackdbl_p= \textlbrackdbl \phi_1 \textrbrackdbl_p \vee \textlbrackdbl \phi_2 \textrbrackdbl_p, \\
	\textlbrackdbl \bot \textrbrackdbl_p = p(\textsf{false}),\quad&  \textlbrackdbl \lnot e  \textrbrackdbl_p= \lnot p(f_e(D_e)),\quad& \textlbrackdbl \lnot u  \textrbrackdbl_p= \lnot p(u),\quad & \textlbrackdbl \phi_1 \land \phi_2 \textrbrackdbl_p= \textlbrackdbl \phi_1 \textrbrackdbl_p \land \textlbrackdbl \phi_2 \textrbrackdbl_p,
\end{align*}
where $u$ is a universal variable and $e$ an existential variable with its dependency set $D_e$. For each existential $e$ there is a newly introduced symbols $f_e$ of arity $|D_e|$
and its arguments in $f_e(D_e)$ are the corresponding universal variables from $D_e$ in some fixed order.

 Suppose we have a DQBF with universal variables $U$, existential variables $E$ and propositional matrix $\phi$, then we can express the problem of whether the Skolemisation results in a tautology in the following way in FOL: 
 $ \forall U \textlbrackdbl \phi \textrbrackdbl_p \land p(\textsf{true})\land \lnot p(\textsf{false})$.
 
This is not yet in the \bs fragment as we have function symbols for the Skolem functions. These can easily be transformed into predicate symbols as they always occur immediately inside predicate $p$.
Thus $p\circ f_x$ can be replaced by its own predicate symbol $x(\ldots)$ of arity $|D_x|$. Once we complete this for all Skolem functions, the problem lands in EPR.\footnote{
To see this transformation preserves satisfiability, we interpret $x$ via $p\circ f_x$ and, in the less obvious direction, use the fact that only a two-valued domain can be used (cf. Lemma 3 in \cite{SLB12}).}

Finally, we can refine the formulas to get rid of the predicate symbol $p$ by resolving away all literals involving $p$ against the units $p(\textsf{true})$, $\lnot p(\textsf{false})$.\footnote{
Again it is easy to see this is a sound step (via soundness of resolution), but the fact it does not introduce spurious counterexamples is less obvious.
An indirect argument goes via completeness of the QBF calculus \ecalculus \cite{JM15,BCJ19}.}
This leaves all predicate symbols to be the Skolem functions and the arguments among literals will belong to a certain structure.
We generalise this structure for our PSPACE fragment QEALM in the next section.

\section{A QBF-like fragment of \bs}\label{sec:frag}

If the \bs fragment corresponds to the DQBF problem, what fragment of first-order logic corresponds to QBF? Some obvious conditions would be that it would itself be a fragment of \bs and then since QBF is \PSPACE-complete this fragment we would want this also to be \PSPACE-complete.

We define our fragment utilising the condition of Observation~\ref{lem:semantic}, which allows us to have a QBF-like semantic game. For this, we need to fix an ordering of each predicate's arguments. We will see later that there are some practical opportunities to relax this ordering (Definition~\ref{def:rearrange}).

\begin{definition}[QBF-like EPR via Alternating Level-ordered Miniscoping (\fragment)]\label{def:fragment}
	A \bs instance is in the \fragment if the following holds:
	
	In each clause if a variable symbol $u$ appears in multiple literals $l_1 \dots l_k$ (this is not necessarily all literals in the clause),
	\begin{enumerate}
		\item the first position it occurs (i.e.\ the lowest index $i$) is the same in all $l_1 \dots l_k$,
		\item for all $0\leq j\leq i$ for any $1\leq p \leq q\leq k$ the $j$-th argument of  $l_p$ must be equal to the $j$-th argument of $l_q$, i.e., there is a shared initial segment.
	\end{enumerate}
	
\end{definition}

\begin{example}\label{ex:game1pre}

	We take the \bs problem with constants $c_1,c_2$:
	$$\begin{aligned}
		\forall u_1, u_2, u_3, u'_3, u_4 \quad(y(u_1,u_2,u_3,u_4))\wedge\\ 
		(\lnot x(u_1,c_1,u_3,c_1)\vee \lnot y(u_1,c_1,u_3,u_4) \vee \lnot x(u_1,c_1,u'_3,u_1)) \wedge \\ ( x(u_1,u_1,u_3,c_1)\vee \lnot y(u_1,u_1,u_3,u_4) \vee x(u_1,u_1,c_2,u_1))
	\end{aligned}$$
	
	Each clause has a prefix tree. For example in the second and third clauses are described by trees in Figure~\ref{fig:ptree}.
	\begin{figure}[h]
	\begin{tikzpicture}
		\node(uc) at (-2,0){$u_1 c_1$};
		\node(uc3) at (0,0.5){$u_1 c_1 u_3$};
		\node(ucuu) at (0,-0.5){$\neg x(u_1 c_1 u'_3 u_1)$};
		\node(uc34) at (2,1){$\neg y(u_1 c_1 u_3 u_4)$};
		\node(uc33) at (2,0){$\neg x(u_1 c_1 u_3 c_1)$};
		\draw(ucuu)--(uc)--(uc3);
		\draw(uc33)--(uc3)--(uc34);
	\end{tikzpicture}\qquad\qquad
	\begin{tikzpicture}
		\node(uc) at (-2,0){$u_1 u_1$};
		\node(uc3) at (0,0.5){$u_1 u_1 u_3$};
		\node(ucuu) at (0,-0.5){$ x(u_1 u_1 c_2 u_1)$};
		\node(uc34) at (2,1){$ x(u_1 u_1 u_3 c_1)$};
		\node(uc33) at (2,0){$\neg y(u_1 u_1 u_3 u_4)$};
		\draw(ucuu)--(uc)--(uc3);
		\draw(uc33)--(uc3)--(uc34);
	\end{tikzpicture}
	\caption{\label{fig:ptree} The prefix trees governing the second(left) and third(right) clauses in Example~\ref{ex:game1pre}}
	\end{figure}
	
	The arguments of the literals appear as nodes on tree. Here all literal arguments appear on the leaves. 
	Variables that are first introduced on one branch such as $u_4$ cannot be introduced in another branch as the same variable, however duplicate variables from earlier in the tree can be reused, and constant symbols can appear on any branch. The leaves are the literals,  we adopt the convention of a  distinction between different unique literals, so when the arguments are the same but we use different predicate symbols we make a final branching. 
	
	The idea is this tree is traversed over the course of a two player game similar to the game found in QBF. 
	In QBF the game follows the linearly ordered quantifiers (for prenex formulas). 
	The tree branching dictates the existential plays choices, while the value of the universal variables along those branches is decided by the universal player.
	Because we intend to solve such formulas by instantiating the variables in order, after the first occurrence repeated variables act more like constants and this is reflected in the definition of membership in the \fragment.

\end{example}

The fragment is a generalisation of how QBFs (written as DQBFs) translate under the transformation in Section~\ref{sec:qbf2epr}. However one key difference, is that the literals can exhibit variable disjointness within a clause. This is not really necessary for \PSPACE-hardness, as translated QBFs are already \PSPACE\ hard, but it will be necessary later in Theorem~\ref{thm:closure} for closure under Robinson's Resolution.

Let us now show that the \fragment has an algorithm that decides its satisfiability in \PSPACE. The idea of the algorithm is to recursively alternate between a universal loop that checks instantiations for falsity, and an existential loop that queries branches on disjunctions while checking for satisfiability. The branching for variable disjointness seen in Example~\ref{ex:game1pre} and Figure~\ref{fig:ptree} is an important feature, because miniscoping can split the quantifiers up onto separate subclauses and then an existential loop can check whether any miniscoped subclauses are satisfiable. Likewise, a non-empty set of arguments shared among the literals of a clause will allow the universal loop to query a joint instantiation to the shared variables. Before we describe how this works,
let us first define the variables that the universal loop of the algorithm will attempt to assign.

\begin{definition}[Outer Variables]\label{def:outer}
	\emph{Outer variables} of a clause are those that appear in every 
	literal of the clause.
\end{definition}

These are named outer variables, because---as we will see---while all other variables can be handled by nested miniscoping, the outer variables necessarily scope the entire clause.
We eventually want variables that scope the entire CNF, but because conjunction commutes with the universal quantifier, variables in different clauses may as well be completely different. Instead we look for positions of arguments that are amenable to Observation~\ref{lem:semantic}.

\begin{definition}[Outer Position]\label{def:outer2}
	\hspace{2cm}
	\begin{itemize}
		\item For a clause $C$, $i$ is an \emph{outer position} ($i\in \opos(C)$) if and only if there is an outer variable $u$ such that $i$ is the least position of an occurrence of $u$ in any literal in $C$.
		
		\item For a clause $C$, $i$ is a \emph{prefix position} ($i\in \npos(C)$) if and only if $C$ is a singleton, or for each $j\leq i$ there is a term $t_j$ that is either a constant or an outer variable, such that the $j$-th position of every literal in $C$ contains $t_j$.
		
		\item We overload the notation $\opos(\phi)$ for a CNF $\phi$. Position $i$ is an \emph{outer position for CNF}  $\phi$ ($i\in \opos(\phi)$) if there is some clause $C$ where $i$ is an outer position, and in all other clauses $D$, position $i$ is a prefix position \\($\exists C \in \phi$ s.t. $ i\in \opos(C)$ and $i \in \bigcap_{D\in \phi}\npos(D)$).
	\end{itemize}
	
\end{definition}

Global outer positions are those that appear outer in at least one clause, but they must  be in the joint argument prefix of every other clause otherwise they might be assigned prematurely.  

\begin{example}
	Let $u_1 \dots $ denote universal variables, $c_1 \dots $denote constant symbols and $p_1 \dots $ denote predicate symbols.

		$$\begin{aligned}
		\forall u_1, u_2, u_3, u_4 \quad\big(p_1(u_1,c_1,u_1,u_2)\lor p_2(u_1,c_1,u_1,u_2,c_2)\lor p_3(u_1,c_1,u_1,u_2,u_3, u_4)\big)\wedge\\ 
		\forall u_1\quad\big(\lnot p_1(c_1,u_1,c_3,u_4)\vee  \lnot p_2(c_1,u_1, c_4, c_3, c_1)\big)
	\end{aligned}$$
	
	In the first clause, positions 1,2, 3 and 4 are prefix and positions 1 and 4 are outer.
	In the second clause position 1 is 
	prefix and position 2 is outer.
	Therefore positions 1 and 2 are outer globally.

\end{example}

\begin{lemma}\label{lem:ulem}
	Suppose we have a CNF $\phi$ in the \fragment and for every $i\in \opos(\phi)$, we have a constant symbol $c_i$. 
	Let us build a substitution $\sigma$ such that for every variable $u$ that appears in some argument  $i\in \opos(\phi)$ for some literal occurring in $\phi$, we replace $u$ with $c_i$.
	When we apply that substitution to $\phi$, the resulting $\bigwedge_{C\in\phi } C\sigma$ remains in \fragment.
	
\end{lemma}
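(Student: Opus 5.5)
The plan is to prove something stronger: \emph{every} substitution $\sigma$ that maps variables to constant symbols keeps a clause inside the \fragment. The particular choice of constants $c_i$ and of outer positions in the statement then plays no role. This also deals with a possible well-definedness worry. A variable $u$ of a clause might occur in two different positions $i,j\in\opos(\phi)$, which would seem to send it to both $c_i$ and $c_j$. In that case we fix one of them, say the one with the least position. Since the argument below works for an arbitrary variable-to-constant substitution, this choice does not matter. Clauses are universally quantified independently, so we may assume they are renamed apart. It then suffices to fix a single clause $C$ of $\phi$ and show that $C\sigma$ satisfies Definition~\ref{def:fragment}.

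First observe what $\sigma$ does to the shape of $C$. It never moves a term from one argument position to another, and it never introduces new variables. So every variable $v$ still occurring in $C\sigma$ is one that $\sigma$ left untouched, and its occurrences in $C\sigma$ are exactly its occurrences in $C$: same literals, same positions. The only caveat is that several literals of $C$ may collapse into one literal of $C\sigma$. If we view clauses as sets of literals, that only removes literals, and both conditions of Definition~\ref{def:fragment} are inherited by subsets. Now take a variable $v$ of $C\sigma$ occurring in literals $l_1\sigma,\dots,l_k\sigma$. Then $v$ occurs in the corresponding $l_1,\dots,l_k$ of $C$ at the same positions.

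\emph{Condition~1.} In $C$ the least position of $v$ is the same index $i$ in every $l_p$, because $C$ is in the \fragment. Since $\sigma$ fixes $v$ and only replaces other variables by constants, it cannot create or destroy occurrences of $v$. Hence the least position of $v$ in each $l_p\sigma$ is still $i$.

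\emph{Condition~2.} For $0\le j\le i$, the $j$-th arguments of $l_p$ and $l_q$ are the same term $t_j$ by hypothesis. Applying the same substitution to both, the $j$-th arguments of $l_p\sigma$ and $l_q\sigma$ are both $t_j\sigma$. So the shared initial segment survives.

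Together these show $C\sigma$ is in the \fragment for every clause $C$, and hence so is the conjunction $\bigwedge_{C\in\phi}C\sigma$. The only step I expect to need care is the one above about collapsing literals and well-definedness of $\sigma$; the rest is immediate. The point to stress is that the \fragment conditions only constrain variables and are stated entirely in terms of equalities of arguments at fixed positions. Equalities are preserved by applying a substitution, and grounding a variable only removes constraints.
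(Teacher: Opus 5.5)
Your proposal is correct and follows essentially the same reasoning as the paper's proof. A variable-to-constant substitution preserves every positional equality between arguments and creates no new variable occurrences, so both conditions of Definition~\ref{def:fragment} carry over; your version is more explicit than the paper's, since it states that the choice of the $c_i$ and of outer positions is irrelevant and it handles the well-definedness of $\sigma$ and the collapsing of literals.
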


\begin{proof}
	When we instantiate a variable with a constant, we replace it everywhere. Therefore any initial argument that agreed across literals before the instantiation remains in agreement.
	Since we replace variables, we do not create any more cases that need to checked. Furthermore, when we replace a variable,
	we replace all the variable's repetitions in a literal.
\end{proof}

These definitions are needed for the universal loop. The following are for the existential.

\begin{definition}[Inseparable Clause]
	An \emph{inseparable clause} $K$ is either a singleton literal or requires that there is some variable that is present in all literals of $K$. In the \fragment, this is the same as every pair of literals sharing a variable. 
	
\end{definition}

\begin{definition}[Component]\label{def:sep}
	A \emph{component}\footnote{We borrow this notion from AVATAR \cite{DBLP:conf/cav/Voronkov14}, where components are introduced as the maximal non-splittable subclauses.} $K$ of $C$ is a non-empty inseparable subclause, where if literal $a$ occurs in $K$ then if there is any literal $b$ that occurs in $C$ that shares a variable with $a$ then $b\in K$.
\end{definition}

\begin{example}
	
	In clause $a (u,v,w) \vee b (u,v) \vee p(x)\vee q()$, the components are \sloppy $a (u,v,w) \vee b (u,v)$, $p(x)$ and $q()$.
	For example, subclause $a (u,v,w) \vee b (u,v)$ is an inseparable clause that has outer variables $u$ and $v$. 
	
	$\forall u ,v,w,x \enspace a (u,v,w) \vee b (u,v) \vee p(x)\vee q()$ carries the same meaning as the miniscoped formula 
	$(\forall u ,v,w  \enspace a(u,v,w) \vee b (u,v) )\vee (\forall x \enspace p(x))\vee q()$.
\end{example}

\begin{lemma}\label{lem:mis}
	Given a CNF $\phi\wedge C$ in \fragment with a clause $C$. If $\phi\wedge C$ is satisfiable, then there is some component $K$ of $C$ such that $\phi\wedge K$ is satisfiable.
\end{lemma}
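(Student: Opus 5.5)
The plan is to use the miniscoping idea from the example after Definition~\ref{def:sep}. First split $C$ into its components. Then show that $\forall\bar x\, C$ is logically equivalent to the disjunction of the universally closed components. Finally read off the claim from a single model of $\phi\wedge C$.

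\textbf{Step 1: the components partition the literals of $C$.} Define a graph on the literals of $C$, joining two literals when they share a variable. Ground literals, and literals with no arguments such as $q()$, are isolated vertices. Each such vertex is a singleton and hence an inseparable component. For a non-isolated literal, its connected component is closed under sharing by construction. It remains to check that this connected component is inseparable, and here the \fragment condition is used.

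I would show that in the \fragment ``sharing a variable'' is transitive. Suppose $a,b$ share $u$ with first position $i$, and $b,c$ share $v$ with first position $j$.
\begin{itemize}
\item If $i\le j$, then by condition~2 of Definition~\ref{def:fragment}, $b$ and $c$ agree on positions $\le j$. Since $b$ carries $u$ at position $i\le j$, so does $c$.
\item If $j<i$, then $a$ and $b$ agree on positions $\le i$, so $a$ carries $v$ at position $j$.
\end{itemize}
Either way $a$ and $c$ share a variable. Hence connected components are cliques, which by the remark in the definition of inseparable clause is the same as inseparability. So every literal of $C$ lies in exactly one component, and distinct components $K_1,\dots,K_m$ are pairwise variable-disjoint. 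I expect this step to be the only real obstacle; the rest is routine.

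\textbf{Step 2: miniscoping.} Since the $K_j$ share no variables, write $\bar x_j$ for the variables of $K_j$. Then $\forall \bar x\,(K_1\vee\dots\vee K_m)$ is equivalent to $\bigvee_j \forall \bar x_j\, K_j$. The non-trivial direction is the usual one: if every $\forall\bar x_j K_j$ failed in an interpretation, pick a falsifying assignment for each $\bar x_j$. These assignments can be combined because the variable sets are disjoint, and the combined assignment falsifies $C$.

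\textbf{Step 3: conclusion.} Take a model $(D,I)$ of $\phi\wedge C$. By Step~2 it satisfies $\forall\bar x_j K_j$ for some $j$, so the same $(D,I)$ is a model of $\phi\wedge K_j$. The degenerate case where $C$ is empty is vacuous, since $\phi\wedge C$ is then unsatisfiable. This finishes the plan.
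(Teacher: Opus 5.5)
Your proposal is correct and takes essentially the same route as the paper. The paper also first establishes that distinct components share no variables and that $C$ is the disjunction of its components, and then combines per-component falsifying assignments into a single assignment falsifying $C$. Your two-case transitivity argument for the sharing relation is a more precise version of the paper's rather informal justification of those two facts.
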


\begin{proof}
	We rely on the following two propositions:
  \begin{enumerate}[nosep]
		\item Two distinct components of a clause do not share
		variables\label{lem:mis1}.
		\item A clause in \fragment is the disjunction of its
		components\label{lem:mis2}.
	\end{enumerate}
	
	Proposition~\ref{lem:mis1} is true because if we have two inseparable subclauses $K_1$ and $K_2$, internally all literals share a variable, and then when $l_1\in K_1$ shares a variable with $l_2\in K_2$ it means $l_1\in K_2$ and $l_2\in K_1$, but then by the QEALM conditions, there will be enough shared prefixes that force that
	all literals of $K_1$ must be in $K_2$ and vice versa.
	Proposition~\ref{lem:mis2} is true because singletons are inseparable subclauses and if we take the closure of literal sets under the operation of including a literal that shares a variable, the first variable argument to appear is always the same, hence the maximal closure of each literal is an inseparable subclause.
	
	Now suppose $\phi\wedge C$ is satisfiable, then we take the interpretation of the predicate symbols that satisfies it. 
	Now suppose, for contradiction, that for each component $K$ of $C$, there is some universal assignment $\alpha_K$ that falsifies it under our interpretation. Then we can restrict $\alpha_K$ to just the variables of $K$ to get $\beta_K$. The  union of all $\beta_K$ over every component $K$ will be a disjoint union and must falsify every $K$, and hence falsifies the clause $C$. By contradiction there must be at least one component $K$ that is always satisfied. Therefore $\phi\wedge K$ is satisfiable.
	
\end{proof}

We present Algorithm~\ref{alg:solvefrag}, which decides the satisfiability of the \fragment.

\begin{algorithm}[h]
	\DontPrintSemicolon
	\SetAlgoNoEnd
	\SetKwProg{Fn}{Function}{:}{}
	
	\Fn{\solve($\phi$)}{
		
		\If{$\phi$ is quantifier-free}{
			\Return \texttt{Sat}($\phi$)\;
		}
		
		\eIf{all clauses are inseparable}{
			\tcp{Universal search}
			
			\For{$\sigma \leftarrow$ total assignments at $\opos(\phi)$}{
				$\phi|_{\sigma} \gets \bigwedge_{C\in\phi} C\sigma$\tcp*{instantiate $\phi$}
				
				\If{\solve($\phi|_{\sigma}$) = UNSAT}{
					\Return UNSAT\;
				}
			}
			\Return SAT\;
			
		}{
			\tcp{Existential search}
			
			\For{$\tau \leftarrow$ one component per clause}{
				$\phi|_{\tau} \gets \bigwedge_{C\in\tau} C$\tcp*{select one component per clause}
				
				\If{\solve($\phi|_{\tau}$) = SAT}{
					\Return SAT\;
				}
			}
			\Return UNSAT\;
		}
		
	}
	\caption{Recursive algorithm for satisfiability of the \fragment \label{alg:solvefrag}}
\end{algorithm}

\begin{example}\label{ex:game1}
	We can take the \bs problem from Example~\ref{ex:game1pre} with constants $c_1,c_2$. It is safe to use the universal quantification on all clauses individually. 
	We can imagine Algorithm~\ref{alg:solvefrag} running like a game between a universal and existential player. Whose turn it is, depends whether all clauses are inseparable or not.
	$$\begin{aligned}
		\forall u_1, u_2, u_3, u_4 \quad\big(y(u_1,u_2,u_3,u_4)\big)\wedge\\ 
		\forall u_1, u_3, u'_3, u_4 \quad\big(\lnot x(u_1,c_1,u_3,c_1)\vee \lnot y(u_1,c_1,u_3,u_4) \vee \lnot x(u_1,c_1,u'_3,u_1)\big) \wedge \\
		\forall u_1, u_3, u_4 \quad\big( x(u_1,u_1,u_3,c_1)\vee \lnot y(u_1,u_1,u_3,u_4) \vee x(u_1,u_1,c_2,u_1)\big)
	\end{aligned}$$
	
	If we follow Algorithm~\ref{alg:solvefrag}, we first observe that all clauses are inseparable and that the first and second position always contain the same argument within clauses.
	Therefore the universal player chooses the value of the variables appearing on the first two arguments. 
	Technically, the various copies of $u_1$ across clauses could be different variables since each clause may as well have its own quantifiers, but it never helps the universal player to assign them differently when they are outer variables (see Observation~\ref{lem:semantic}). The meaning of the outer variables is that we can take them outside of the formula:
	$$\begin{aligned}
		\forall u_1, u_2
		\big(\\
		\forall u_3, u_4 \quad\big(y(u_1,u_2,u_3,u_4)\big)\wedge\\ 
		\forall u_3, u'_3, u_4 \quad\big(\lnot x(u_1,c_1,u_3,c_1)\vee \lnot y(u_1,c_1,u_3,u_4) \vee \lnot x(u_1,c_1,u'_3,u_1)\big) \wedge \\\ 
		\forall u_3, u_4 \quad\big( x(u_1,u_1,u_3,c_1)\vee \lnot y(u_1,u_1,u_3,u_4) \vee x(u_1,u_1,c_2,u_1)\big)\big)
	\end{aligned}$$
	
	For a universal assignment we assign both the first and second argument ($u_1$ and $u_2$) to $c_1$ and the second argument. %
	Each application of a universal assignment yields another problem still in the fragment.  
	$$\begin{aligned}
		\forall u_3, u_4 \quad\big(y(c_1,c_1,u_3,u_4)\big)\wedge\\ 
		\forall u_3, u'_3, u_4 \quad\big(\lnot x(c_1,c_1,u_3,c_1)\vee \lnot y(c_1,c_1,u_3,u_4) \vee \lnot x(c_1,c_1,u'_3,c_1)\big) \wedge \\\ 
		\forall u_3, u_4 \quad\big( x(c_1,c_1,u_3,c_1)\vee \lnot y(c_1,c_1,u_3,u_4) \vee x(c_1,c_1,c_2,c_1)\big)
	\end{aligned}$$
	
	At this point some clauses are separable into components, this means we can rewrite the problem using miniscoping.
	$$\begin{aligned}
		\big(\forall u_3, u_4 \enspace y(c_1,c_1,u_3,u_4)\big)\wedge\\ 
		\big(\forall u_3, u_4 \enspace \lnot x(c_1,c_1,u_3,c_1)\vee \lnot y(c_1,c_1,u_3,u_4) \big) \textcolor{red}{\vee} \big(\forall u'_3\enspace\lnot x(c_1,c_1,u'_3,c_1)\big) \wedge \\\ 
		\big(\forall u_3, u_4 \enspace x(c_1,c_1,u_3,c_1)\vee \lnot y(c_1,c_1,u_3,u_4)\big) \textcolor{red}{\vee}\big( x(c_1,c_1,c_2,c_1)\big)
	\end{aligned}$$
	
	Now it is fine for the existential player to choose the disjuncts he will use to gain a satisfiable ground problem. There are $1 \times 2 \times 2$ options this time.
	After choosing the disjunct we remain in the fragment, and now return to every clause being inseparable.
	However unfortunately for the existential player no matter the choice here, if we continue to play this game the universal player can force a contradiction. 
	
\end{example}

One alternative formulation is to understand components as abstracted to propositional literals and we loop through assignments to those variables during the existential search. This formulation may lead to more interesting solving approaches (i.e. clause learning or CEGAR: counter-example guided abstraction and refinement), but we do not need it here to show correctness and constraint learning complicates our incoming argument about space complexity.

\begin{lemma}[Correctness]\label{lem:correctness}
	If $\phi$ is a \fragment problem that does not contain the empty clause then 
	$\solve(\phi)$ (Algorithm~\ref{alg:solvefrag}) returns SAT if and only if there is a satisfying interpretation for $\phi$.
\end{lemma}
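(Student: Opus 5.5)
We will prove Lemma~\ref{lem:correctness} by well-founded induction on a measure that strictly decreases along every recursive call of $\solve$. A natural choice is the pair (number of variable occurrences in $\phi$, total number of literals in $\phi$), ordered lexicographically, or simply the total size of $\phi$. Each recursive call must make progress, and we check this in both branches.

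In the universal branch, all clauses are inseparable and not quantifier-free. Any clause containing a variable has an outer variable, because an inseparable non-singleton clause has a variable common to all literals, and a singleton's variables are all outer. Then $\opos(\phi)\neq\emptyset$ needs an argument. Its first outer position is a prefix position in every other clause, since by QEALM condition~2 the initial segments agree. So at least one variable is instantiated and the size drops.

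In the existential branch, some clause has at least two components. Selecting one component per clause removes literals, so the size again drops.

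For the quantifier-free base case, $\texttt{Sat}$ is correct by definition. We also use Lemma~\ref{lem:ulem}, together with the observation that selecting subclauses preserves the QEALM conditions, so every recursive call stays in the fragment. The empty clause is never produced, because components are non-empty; this keeps the induction hypothesis applicable.

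Existential branch. If some selection $\phi|_\tau$ is satisfiable, then $\phi$ is satisfiable, since each component $K$ of $C$ is a subclause and $K\models C$. Conversely, if $\phi$ is satisfiable, apply Lemma~\ref{lem:mis} repeatedly, clause by clause. Replacing $C$ by a component $K$ while keeping the rest of the formula satisfiable, then moving on to the next clause, yields a satisfiable selection $\tau$. The induction hypothesis then shows the loop returns SAT.

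Universal branch. If some instance $\phi|_\sigma$ is unsatisfiable, then $\phi$ is unsatisfiable, because $\phi\models\phi|_\sigma$. The converse direction is the main obstacle: if $\phi$ is unsatisfiable, some assignment at $\opos(\phi)$ must already give an unsatisfiable instance. We plan to generalise the proof of Observation~\ref{lem:semantic}, arguing position by position. Take a trimmed ground propositional resolution refutation of $\phi$. Let $i_1<\dots<i_m$ be the positions of $\opos(\phi)$, and show by induction on the derivation that all clauses in it agree on the terms in positions $1,\dots,i_m$. Two facts drive this: within a ground instance of a clause, positions up to $i_m$ are prefix positions, hence identical across its literals; and resolving on a pivot forces agreement of these arguments between the premises.

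One subtlety must be handled first. Every predicate occurring in the refutation needs arity at least $i_m$, or a clause must be a singleton. We plan to deal with singletons using the clause from the definition of $\npos$ and argue that the prefix condition is only needed up to the arity actually present. Otherwise we restrict to the positions that exist in every literal, noting that a literal shorter than $i_m$ in a non-singleton clause violates the $\npos$ condition.

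Given the common ground values $c_{i}$ at the outer positions, every axiom used in the refutation is an instance of $C\sigma$ with $\sigma$ mapping outer-position variables to $c_i$. Hence $\phi|_\sigma$ is unsatisfiable, and by the induction hypothesis the loop returns UNSAT.

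Finally, if no $\sigma$ yields UNSAT, the induction hypothesis says every $\phi|_\sigma$ is satisfiable. The contrapositive of the preceding argument then gives that $\phi$ is satisfiable, so SAT is correct.
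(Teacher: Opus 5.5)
Your proposal is correct and follows essentially the same route as the paper: induction with the SAT base case, the existential branch via Lemma~\ref{lem:mis} (miniscoping into variable-disjoint components), and the universal branch via the argument of Observation~\ref{lem:semantic} that a trimmed ground refutation cannot mix different instantiations of the shared outer prefix. Your lexicographic measure is in fact more careful than the paper's count of variables plus predicate symbols, which need not drop in the existential branch. However, to get $\opos(\phi)\neq\emptyset$ you should take the \emph{least} first outer position over all clauses, since an arbitrary clause's first outer position need not be a prefix position of the other clauses.
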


\begin{proof}
	We can perform induction on the number of quantified variables + predicate symbols appearing in the formula
	
	\noindent\textbf{Base Case: Fully Ground.}
	If $\phi$ is ground and \solve($\phi$) returns SAT then there is a satisfying Boolean assignment to the occurring ground literals. This can be used to construct the truth table for the predicates, and these predicates will satisfy $\phi$.
	If $\phi$ is ground and \solve($\phi$) returns UNSAT then there is no satisfying Boolean assignment to the ground values, hence there is no truth table interpretation of the predicates which does not falsify $\phi$, and a predicate interpretation needs a truth table to exist, so there is no satisfying interpretation.
	
	\noindent\textbf{Inductive Step: Universal Search.}
	If $\phi$ has a satisfying interpretation then the same satisfying interpretation satisfies all clauses under instantiation, therefore $\solve(\phi)$ returns SAT after it exhausts all assignments to the variables at $\opos(\phi)$.
	If $\phi$ has no satisfying assignment then it has a Ground Resolution refutation, however said refutation cannot mix variables from different instantiations of the fully unified argument prefix (there are no resolution paths see Obs~\ref{lem:semantic}). Therefore, there is some instantiation of the $S$ variables where $\phi$ has no satisfying interpretation and hence \solve($\phi$) returns UNSAT.
	
	\noindent\textbf{Inductive Step: Existential Search.} If $\phi$ is unsatisfiable then any problem composed of subclauses of $\phi$ is also unsatisfiable, therefore \sloppy $\solve(\phi)$ returns UNSAT.
	Now suppose $\phi$ has a satisfying interpretation.
	We can take $\phi$ out of its prefix form.
	Since $\forall$ and $\wedge $ commute, each clause could have its own quantification, then we push all universal quantifiers to the level of the components.
	Lemma~\ref{lem:mis} is important because components are disjoint and quantified variables are not shared between them. 
	So we have a disjunction where each disjunct is universally quantified. The satisfying interpretation must satisfy at least one disjunct in each conjunct, and every assignment of disjuncts gets selected in the loop.
\end{proof}

\begin{lemma}[\PSPACE\xspace inclusion]\label{lem:memory}
	Given a suitable SAT solver black box, the amount of memory of Algorithm~\ref{alg:solvefrag} is bounded from above by a polynomial in the size of the input CNF $\phi$, if $\phi$ is a \fragment problem.
\end{lemma}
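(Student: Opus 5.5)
The plan is the standard argument for recursive alternating procedures: the memory used by \solve\ is at most (recursion depth) $\times$ (memory per stack frame) plus whatever the SAT oracle needs at the leaves. I would show that the depth is polynomial, in fact at most linear in $|\phi|$, and that each frame needs only polynomial space. The SAT black box is called only on ground formulas whose size is at most $|\phi|$, so a polynomial-space SAT procedure suffices there.

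\textbf{Per-frame memory.} A universal frame stores the current total assignment $\sigma$ to the positions in $\opos(\phi)$. That is one constant per position, so $O(|\phi|\log|\phi|)$ bits, and the loop steps through assignments in lexicographic order. An existential frame stores $\tau$, one index per clause pointing at a chosen component, which again takes polynomially many bits. The child formulas $\phi|_\sigma$ and $\phi|_\tau$ are instances, or selections of subclauses, of clauses of $\phi$, so their size never exceeds $|\phi|$. By Lemma~\ref{lem:ulem} and the subclause argument they also stay in \fragment, so the recursion is well defined throughout.

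\textbf{Depth bound, the main obstacle.} I would use a measure such as the total number of variable occurrences, or simply the number of distinct variables plus the number of separable clauses, and show it strictly decreases in a suitable sense. In a universal step, $\opos(\phi)$ must be nonempty whenever some clause is non-ground and all clauses are inseparable. An inseparable non-singleton clause has a variable in all its literals. By condition 1 of Definition~\ref{def:fragment}, its first position is common to those literals, and by condition 2 every earlier position carries a shared term. So this is an outer position of that clause. I then need it to be a prefix position in every other clause. The hard case is when different clauses have different least outer positions. Here I would take the minimum over all clauses of their least outer position, and argue via condition 2 that each clause agrees on all earlier positions (singleton clauses are prefix everywhere). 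Hence this minimum lies in $\opos(\phi)$, and instantiating it removes at least one variable. In an existential step, some clause is separable and gets replaced by a single component, so the number of literals strictly drops while no variables are added. The next universal step then removes at least one variable. Pairing the steps, the depth is at most $2(\#\text{variables}+\#\text{literals})+1$, which is linear.

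Combining the linear depth with the polynomial per-frame storage, and noting that frames only hold pointers and assignments with the formula at each level recomputable or storable in $O(|\phi|)$, gives total memory $O(|\phi|^2\log|\phi|)$ plus the SAT solver's polynomial space. This proves the claim and, together with Lemma~\ref{lem:correctness}, gives \PSPACE\ membership.
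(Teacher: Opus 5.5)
Your proposal is correct and follows essentially the same argument as the paper. You bound the recursion depth linearly (each universal step instantiates at least one variable, each existential step removes at least one literal) and multiply by polynomial memory per frame and per SAT call. Your one addition is an explicit check that $\opos(\phi)\neq\emptyset$ in the universal case, obtained by taking the minimum least outer position over all clauses; the paper's proof takes this for granted.
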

\begin{proof}
	The algorithm requires a stack of recursive calls. The stack size of recursive calls from a universal search is bounded by the largest arity of a predicate or the number of universal quantifiers, since each instantiation sets a constant value to at least one variable that is shared in all literals.
	The stack of the recursive calls from an existential search is bounded by the number of times we can split at least one clause, so an upper bound is the total number of literals.
	Therefore there are at most a linear number of recursive calls at any one time. Moreover, each recursive call only uses a polynomial amount of memory for the loops, propositional logic and the SAT call. Hence the algorithm uses \PSPACE. 
\end{proof}
This algorithm essentially saves memory because Observation~\ref{lem:semantic} allows us to query the outer variables of different clauses jointly. The QEALM condition allows us to recursively remain in the position where we can either apply miniscoping or Observation~\ref{lem:semantic} and afterwards remain in QEALM. Without the QEALM conditions, we may eventually arrive at a position where we can apply neither.
In fact, with a bit more detail %
we can do better than just \PSPACE\xspace and identify the fragments that belong in classes of the polynomial hierarchy. The idea  is to identify which argument positions may force an existential search in Algorithm~\ref{alg:solvefrag}. If we bound the number of times an existential search happens between universal queries we bound the number of alternations by roughly twice of that. 

\begin{definition}
	Let $\Phi$ be a problem in the \fragment.
	Position $i>0$ is a \emph{fork index} if there is a clause $C \in \Phi$ and a pair of literals $l_1$ and $l_2$ in $C$ (not necessarily distinct)
	such that $i$ is the minimal value such that all variables shared between $l_1$ and $l_2$ occur as arguments in positions $\leq i$.
\end{definition}

\begin{example}Let $u_1 \dots $ denote universal variables, $c_1 \dots $denote constant symbols and $p_1 \dots $ denote predicate symbols.
	For $\forall u_1 u_2 u_3\big(p_1(u_1,u_2) \vee p_2(u_1,u_2, c_1,u_3)\big) \wedge \big(\neg p_2(u_1, u_2, u_1, u_3\big)$.	
	The fork indices are $2$ and $4$. 
\end{example}

After a universal search the least
current fork index is removed because joint variables are instantiated, whereas an existential search will result in no new fork index.

\begin{lemma} [$\Sigmap{2k-1}$, $\Pip{2k}$ \xspace inclusion]\label{lem:ph}
	Given a set $\{\Phi_j\}$ of problems in the \sloppy \fragment, let the number of fork indices be bound by $k$.
	Then $\{\Phi_j\}$ is a family in the complexity class:
	\begin{itemize}
		\item $\Pip{n}$ if each formula of $\{\Phi_j\}$ has only inseparable clauses, where $n=2k$.
		\item $\Sigmap{n}$ in all $\{\Phi_j\}$, where $n=2k+1$, otherwise
	\end{itemize}
\end{lemma}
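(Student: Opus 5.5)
We will prove the lemma by turning Algorithm~\ref{alg:solvefrag} into an alternating polynomial-time algorithm and counting the quantifier alternations it makes. The loop in a universal search becomes a block of $\forall$-guesses: we universally guess the assignment $\sigma$ at $\opos(\phi)$ and continue on $\phi|_\sigma$. The loop in an existential search becomes a block of $\exists$-guesses: we existentially guess the choice $\tau$ of one component per clause and continue on $\phi|_\tau$. Each such guess has polynomial size, because $\sigma$ assigns a constant to at most arity-many positions and $\tau$ picks at most one component per clause. Lemma~\ref{lem:correctness} shows that the alternating machine accepts exactly the satisfiable instances. The final ground case is a call to \texttt{Sat}, which is an $\NP$ check and so merges into a trailing existential block. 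By Lemma~\ref{lem:memory}, every branch has polynomial depth. It therefore remains to bound the number of alternating blocks in terms of the number of fork indices.

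The key step is a potential argument on fork indices. We will show two claims.
\begin{enumerate}
\item An existential step creates no new fork index. This holds because $\phi|_\tau$ consists of subclauses of $\phi$, so every pair of literals in $\phi|_\tau$ was already a pair in $\phi$ with the same shared variables.
\item Every universal step that follows an existential search strictly removes the least remaining fork index.
\end{enumerate}

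For the second claim we argue as follows. When all clauses are inseparable, every pair of literals in a clause shares a variable. By the QEALM conditions, the shared variables whose first occurrence is minimal sit in a common prefix, so the outer positions include the least position at which a shared variable occurs. Instantiating these positions with constants (Lemma~\ref{lem:ulem} keeps us in the fragment) eliminates the shared variables up to that position. This should force the minimal fork index to disappear, or at least to become non-minimal, because after instantiation the pairs witnessing it share no variable up to that index.

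It then follows that the sequence of non-ground steps along any branch alternates through blocks, and each universal block consumes at least one fork index. Consecutive universal steps merge into a single block, as do consecutive existential steps. Hence there are at most $k$ universal blocks, interleaved with existential blocks, followed by the final $\NP$ block.

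To finish, we count blocks. If every clause of $\Phi_j$ is inseparable, the computation starts with a universal block, and the pattern $\forall\exists\forall\exists\cdots$ with at most $k$ universal blocks, each followed by an existential block (the last one absorbing the SAT call), gives at most $2k$ alternating blocks, so the family is in $\Pip{2k}$. Otherwise the computation may begin with an existential block, which adds one more block and gives $\Sigmap{2k+1}$. The $\Pip{2k}$ case also lies in $\Sigmap{2k+1}$, so the second bullet covers all families. We also have to check that the tail is handled correctly: once no fork index remains, clauses can still split but no variable is shared between literals, so only existential choices and the ground SAT check are left, which together form one $\exists$ block.

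The main obstacle is the second claim. A universal step instantiates only the outer positions, and these may fail to cover all shared variables up to the least fork index. In that case a following existential split might not appear to reduce the count, and we must rule out a repeated universal, existential, universal pattern that uses the same fork index twice. We will first try to show, directly from condition~2 of Definition~\ref{def:fragment} (the shared initial segment), that the shared variables of any inseparable pair occupy an initial segment in which every variable is outer. If that fails, we will measure progress with a finer potential, namely the number of fork indices still carried by some variable pair, and check that this potential decreases across each universal-then-existential round.
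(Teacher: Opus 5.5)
Your alternating-machine framework is sound and matches the paper's induction: you simulate Algorithm~\ref{alg:solvefrag}, note that existential steps create no new fork index, and count blocks. However, your second claim is the step that carries all the content, and it is not proved. You write that instantiation ``should force'' the least fork index to disappear ``or at least to become non-minimal'', and then defer the issue. Becoming non-minimal is useless for the count. Your planned lemma is also false as stated: it says the shared variables of any pair in an inseparable clause form an initial segment of outer variables. In $a(u,v,w)\vee b(u,v)\vee c(u)$ the first two literals share $v$, which is not outer.

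The missing idea is the \emph{global} minimality of the least fork index $m$. When all clauses are inseparable, every pair of literals in a clause shares a variable. By condition~2 of Definition~\ref{def:fragment}, the pair then agrees on all positions up to its own fork index, which is $\ge m$. Hence all literals of every clause agree on positions $1,\dots,m$, with the following consequences:
\begin{itemize}
\item Any variable at such a position occurs in every literal, so it is outer.
\item Every position $\le m$ is a prefix position of every clause.
\item The first-occurrence position of each such variable lies in $\opos(\phi)$.
\end{itemize}
So $\sigma$ grounds positions $1,\dots,m$ in every literal, and $m$ cannot remain a fork index. This is exactly the paper's argument, and it removes your worry that outer positions might miss shared variables below $m$.

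There is a second gap: you never show that a universal step creates no new fork index, since your first claim covers only existential steps. This is needed because instantiation shrinks the set of shared variables of a pair. It could therefore lower that pair's fork index to a fresh value, and then removing $m$ would not decrease the count. The paper closes this as follows. Suppose that after instantiation a pair has new fork index $p$, carried by an unassigned shared variable $u$. Suppose also that some assigned shared variable first occurs at $q>p$. Then $q$ is a prefix position in all clauses, and hence so is $p$. So $u$ is outer and $p\in\opos(\phi)$, which means $u$ would have been assigned, a contradiction. With both facts established, each universal step strictly lowers the number of fork indices, and your block count goes through.
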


\begin{proof}%
	By induction on the maximum number of recursive calls of Algorithm~\ref{alg:solvefrag} would need to solve it, which is a finite value by Lemma~\ref{lem:memory}.
	
	\noindent\textbf{Base Case:} When there is no more recursive calls, there are no quantified variables and no fork indices. Thus we can solve the problem as a $\Sigma^P_1$ problem with SAT solving.
	If all clauses are inseparable then there are no disjunctions and we fall into \textsf{PTIME}=$\Pi^P_0$.
	
	\noindent\textbf{Inductive Step ($\Pi^P_{2k}$ case):} Suppose all clauses, of the matrix $\phi$ are inseparable.
	Assume we are not in the base case, and there is at least one variable, so there is some minimum fork index $m>0$.
	For every clause $C\in \phi$ and every $0< j\leq m$ must be a prefix position, otherwise due to the definition of QEALM there is a fork index strictly less than $j$. Suppose we have some literal $l\in C$ that contains a variable $v$ at argument position $0< j\leq m$, since this is a prefix position it must either be an outer position for $C$ or be some repetition of a $v$ that occurred first at argument $i\leq j$.
	Whatever the case, during the universal search in Algorithm~\ref{alg:solvefrag}, $\sigma$ will assign this variable to a constant. Therefore all arguments up to and including $m$ for all literals in all clauses in $\bigwedge_{C\in\phi} C\sigma$ will not be variables. 
	Therefore $m$ is no longer a fork index in $\bigwedge_{C\in\phi} C\sigma$.

	Suppose $p>0$ is a fork index in some clause $C\sigma$ then we have a pair of literals $l_1\sigma$ and $l_2\sigma$ that share all common variables minimally after the $p$th argument. But in that case there is some variable $u$ occurring at $p$ that is not assigned by $\sigma$, $u$ must be present in $l_1$ and $l_2$. The only way $p$ cannot be a fork index in $C$ is if there is some variable $v$ shared between $l_1$ and $l_2$ that first occurs after $p$ at argument position $q$ that is assigned by $\sigma$. That means $q$ is a prefix position in all clauses, so $p$ must be prefix in all clause. $q$ must be outer in $C$. So $p$ must also be outer in $C$, so $p$ would not be a fork index after instantiation as presumed. 
	
	Thus we decrease the number of forks indices by $1$, but now have clauses that are not inseparable therefore $\bigwedge_{C\in\phi} C\sigma$ is always in $\Sigma^P_{2k+1-2}$ via the induction hypothesis.
	Since, we query over a polynomial size  $\sigma$ to find a $\Sigma^P_{k+1}$ to falsify, deciding $\phi$ is in $\Pi^P_{2k}$.
	
	\noindent\textbf{Induction Hypothesis ($\Sigma^P_{2k+1}$ case):}
	Suppose we have a number of separable clause. In the existential search of Algorithm~\ref{alg:solvefrag}, we filter this to selecting individual components, all clauses in $\bigwedge_{C\in \tau} C$ are inseparable, but we create no new fork indices as each clause is a subset of version in $\psi$, and so we consider a subset of the set of pairs of literals to find the fork indices. Therefore $\bigwedge_{C\in \tau} C$ is a $\Pi^P_{2k}$ problem. Since we loop over the choice of (polynomial size) $\tau$ and accept when one selection works we are in $\Sigma^P_{2k+1}$.
\end{proof}

We can take inclusion further and show completeness; first in the \PSPACE{} case and then for the classes of the polynomial hierarchy.

\begin{lemma}[\PSPACE-hardness]\label{lem:hardness}
	Any closed (totally quantified) QBF problem can be transformed into a \fragment problem with polynomial time computable function $f$,
	where $f(\Psi)$ is satisfiable if and only the QBF $\Psi$ is true.
\end{lemma}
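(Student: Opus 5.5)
Let $\Psi = \mathcal{Q}_1 x_1 \dots \mathcal{Q}_n x_n\, \psi$ be a closed prenex QBF. I would use the translation of Section~\ref{sec:qbf2epr}, with the QBF viewed as a DQBF in which $D_e$ is the set of universals preceding $e$. First I put $\psi$ in CNF. If $\psi$ is not given in CNF, I apply a Tseitin transformation. The fresh Tseitin variables are existentially quantified innermost, which preserves truth and keeps the transformation polynomial.

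The translation then produces one predicate per existential $e$ of arity $|D_e|$. The arguments are listed in prefix order, so every argument tuple is an initial segment of the fixed list $(u_1,\dots,u_m)$ of universals in quantifier order. Universal literals are resolved against $p(\textsf{true})$ and $\lnot p(\textsf{false})$. Concretely, each clause $C$ of $\psi$ is replaced by the instance obtained by substituting $u \mapsto \textsf{true}$ for each positive universal literal $u$ in $C$, and $u \mapsto \textsf{false}$ for each negative one. This is the falsifying choice, and those literals are dropped.

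If a clause contains both $u$ and $\lnot u$, it is tautological and is deleted. Clauses that contain only universal literals become the empty clause, which is fine: the QBF is then false, and so is the formula. Each remaining literal $e$ becomes $e(t_1,\dots,t_{|D_e|})$, where $t_j$ is a constant if $u_j$ was instantiated and the variable $u_j$ otherwise. I would define $f(\Psi)$ as this CNF. It is clearly computable in polynomial time, and it is \bs.

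\textbf{Membership in the \fragment.} Every literal's argument list is a prefix of the common sequence $(t_1,\dots,t_m)$ determined by the clause. A variable $u_j$ therefore occurs only at position $j$, so condition 1 of Definition~\ref{def:fragment} holds. Any two literals containing $u_j$ both have length at least $j$, so they agree on positions $1,\dots,j$, which gives condition 2.

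\textbf{Equisatisfiability.} This is the main step. Soundness direction: if $\Psi$ is true, the Skolem functions, viewed as Boolean functions on $\{\textsf{true},\textsf{false}\}$, give an interpretation of each predicate. Every ground instance of a translated clause is then satisfied, because the original clause instance under those Skolem functions is a tautology. Here one must make the domain the constants $\{\textsf{true},\textsf{false}\}$ only. To ensure this, I would add no other constants. If a clause has no constants, the Herbrand universe is adjusted by adding both constants; this is harmless because the units carry them anyway.

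Converse direction: suppose $f(\Psi)$ has an interpretation $I$ over the domain $\{\textsf{true},\textsf{false}\}$. Then $e \mapsto I(e)(u_1,\dots)$ is a Skolem function depending only on $D_e$. For any universal assignment, every clause $C$ of $\psi$ is satisfied. Either some universal literal of $C$ is true, or all of them are false, in which case the universal values coincide with the substituted constants and the instantiated clause gives a true existential literal.

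I expect the main obstacle to be the bookkeeping in this last direction, in particular making sure the two-element domain is all that matters. That holds because $f(\Psi)$ is \bs with only those two constants. The remainder is routine verification of Definition~\ref{def:fragment}.
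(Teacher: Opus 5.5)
Your proposal is correct and follows the paper's route. It uses the translation of Section~\ref{sec:qbf2epr} with $D_e$ the universals to the left of $e$, eliminates the universal literals against the two units, and matches truth of $\Psi$ with existence of Skolem functions, while spelling out the \fragment{} membership check and a direct two-way Skolem argument that the paper only sketches (it relies on the footnotes of Section~\ref{sec:qbf2epr}).

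One slip to fix: resolving $p(u)$ against $\lnot p(\textsf{false})$ instantiates $u\mapsto\textsf{false}$ for a positive universal literal (and $u\mapsto\textsf{true}$ for $\lnot u$). Your stated mapping is therefore swapped relative to the ``falsifying choice'' that your converse direction depends on. The swapped encoding happens to be equisatisfiable as well, by the symmetry $u\leftrightarrow\lnot u$ of universal quantification, but your argument as written only covers the falsifying substitution.
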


\begin{proof}[Sketch Proof]
	For $f$, we use the transformation detailed in Section~\ref{sec:qbf2epr} up to the end including the removal of the first order interpretation of universal literals. 
	We only consider prenex QBFs in CNF (without tautological clauses) but this will be sufficient for \PSPACE-hardness. Note here that our domain is QBFs not DQBF, so the dependency set of existential variables are those universal variables left of it.
	A prenex QBF is true if and only if it has a set of satisfying Skolem functions, which is precisely what is encoded in the transformation.
	
\end{proof}

\begin{theorem}\label{thm:completeness}
	The \fragment is \PSPACE-complete.
\end{theorem}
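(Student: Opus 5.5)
The theorem follows by combining the two directions already established. Membership in \PSPACE\ comes from Algorithm~\ref{alg:solvefrag}. By Lemma~\ref{lem:correctness} it decides satisfiability of any \fragment instance. I first dispose of the trivial case where $\phi$ contains the empty clause: then $\phi$ is unsatisfiable, which can be checked in linear time. By Lemma~\ref{lem:memory}, the recursion depth is linear and each frame stores only polynomially many bits. A frame holds the current assignment $\sigma$ to the outer positions (one constant per position, each of logarithmic size), the current choice $\tau$ of one component per clause, and the instantiated formula.

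The only remaining dependency is the ``SAT black box'' in the base case. I will replace it by a brute-force enumeration of truth assignments to the ground atoms of the quantifier-free formula; this runs in polynomial space. One subtlety is that instantiations must be computed on the fly rather than stored. Since $\phi|_\sigma$ and $\phi|_\tau$ are no larger than $\phi$, storing them in each frame is still polynomial overall. I also need $\phi|_\tau$ and $\phi|_\sigma$ to stay inside \fragment so the lemmas apply recursively. For $\sigma$ this is Lemma~\ref{lem:ulem}. For $\tau$ it holds because a subclause of a QEALM clause still satisfies both conditions of Definition~\ref{def:fragment}.

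For hardness, I invoke Lemma~\ref{lem:hardness}. True QBF (prenex CNF, closed) is \PSPACE-complete, and $f$ is a polynomial-time reduction to \fragment satisfiability. The step I would check carefully is that $f(\Psi)$ really lies in the fragment. After the translation of Section~\ref{sec:qbf2epr}, each literal has the form $x(u_1,\dots,u_m)$, where $u_1,\dots,u_m$ are the universals left of $x$ in prefix order. The universal literals have been resolved away against $p(\textsf{true})$ and $\lnot p(\textsf{false})$, leaving the constants \textsf{true} and \textsf{false} in their place. So in any clause, a shared variable $u_j$ occurs at position $j$ in every literal containing it, giving condition~1. Any two literals that both contain $u_j$ both have argument list $u_1,\dots,u_j$ up to position $j$, giving the shared initial segment of condition~2. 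The resolution step may instantiate some $u_i$ to \textsf{true} or \textsf{false} uniformly throughout the clause, and this keeps the prefixes equal, as in Lemma~\ref{lem:ulem}.

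The main obstacle is the correctness of $f$, namely that removing $p$ preserves equisatisfiability. I would rely on the cited results (Lemma~3 of \cite{SLB12} and completeness of \ecalculus), as the paper's footnotes do. Combining membership and hardness yields \PSPACE-completeness.
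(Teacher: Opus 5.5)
Your proposal is correct and follows the paper's proof. Membership comes from Algorithm~\ref{alg:solvefrag}, with the empty clause handled separately, together with Lemmas~\ref{lem:correctness} and~\ref{lem:memory}, and hardness comes from the reduction of Lemma~\ref{lem:hardness}. Your extra checks fill in details the paper leaves implicit rather than changing the route: the polynomial-space brute-force replacement for the SAT oracle, closure of the fragment under $\sigma$ and $\tau$, and membership of the translated formula when arguments follow the prefix order.
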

\begin{proof}
	For hardness we have the transformation in Lemma~\ref{lem:hardness}. 
	For inclusion in \PSPACE\xspace we use Algorithm~\ref{alg:solvefrag},  except where there is an empty clause then the formula is false. The properties of Lemma~\ref{lem:correctness}~and~\ref{lem:memory} complete the argument.
\end{proof}

\begin{theorem} [$\Sigmap{2k-1}$, $\Pip{2k}$-completeness]
	\label{thm:phcompleteness}
	The following is true:
	\begin{itemize}
		\item The set of all \fragment problems which have $0$ as a fork index and $k$ or fewer fork indices in total is $\Sigmap{2k-1}$-complete.
		\item The set of all \fragment problems which have $k$ or fewer fork indices in total is $\Pip{2k}$-complete.
	\end{itemize}
\end{theorem}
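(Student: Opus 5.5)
Each bullet needs two things: membership, via Lemma~\ref{lem:ph}, and hardness, via a restriction of the reduction of Lemma~\ref{lem:hardness} to QBFs with a bounded number of quantifier blocks.

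\textbf{Membership.} For the $\Pip{2k}$ bullet, we start from a problem with at most $k$ fork indices. If all its clauses are inseparable, the first case of Lemma~\ref{lem:ph} applies directly. For the $\Sigmap{2k-1}$ bullet, the hypothesis that $0$ is a fork index means some clause has two literals sharing no variables. Such a clause is separable, so Algorithm~\ref{alg:solvefrag} begins with an existential search. Index $0$ is consumed by this first existential step, and only $k-1$ positive fork indices remain. Repeating the induction of Lemma~\ref{lem:ph} with this bookkeeping gives one existential layer followed by at most $k-1$ universal/existential rounds. That yields $\Sigmap{2(k-1)+1}=\Sigmap{2k-1}$.

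A similar recount is needed for $\Pip{2k}$, where the possible separability in the final ground layer is an extra $\Sigma$ level (a SAT call). Lemma~\ref{lem:ph} already absorbs this. The real point to check is that counting index $0$ among the fork indices is consistent with the counting in Lemma~\ref{lem:ph}. I will re-run its induction with the convention that fork index $0$ corresponds exactly to a leading existential search.

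\textbf{Hardness.} I will restrict the reduction of Lemma~\ref{lem:hardness} to prenex CNF QBFs with a bounded prefix. For $\Pip{2k}$, I take prefixes $\forall U_1\exists E_1\cdots\forall U_k\exists E_k$. To land in the fragment's argument order, I order the universal variables block by block, $U_1,\dots,U_k$. Each existential $e\in E_j$ becomes a predicate whose arguments are $U_1\cdots U_j$ in that fixed order, and universal literals are resolved away as described in Section~\ref{sec:qbf2epr}.

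Within a clause, two literals for existentials of levels $j\le j'$ share exactly the variables of the clause's universals from $U_1\cdots U_j$. A clause with tautology-free universal literals instantiates some of these to $\textsf{true}/\textsf{false}$, but they remain constants. The last shared variable therefore sits inside the block of positions $|U_1|+\dots+|U_j|$. To make fork indices coincide with block boundaries, I will pad: every predicate of level $j$ gets a fresh variable argument at the end of each block $U_i$ with $i\le j$, shared throughout the clause, or the same padding is achieved by arranging that each universal block contributes one designated last variable used in every clause. Then the possible fork indices are $|U_1|,\,|U_1|+|U_2|,\dots$, at most $k$ of them.

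For $\Sigmap{2k-1}$, I take prefixes $\exists E_0\forall U_1\exists E_1\cdots\forall U_{k-1}\exists E_{k-1}$. The outer existentials become nullary predicates. Clauses mixing such literals with others have literals sharing no variables, which produces fork index $0$, and there are $k-1$ positive fork indices. If index $0$ does not arise naturally, I will add a dummy clause $q()\vee r()$ with fresh predicates; this keeps the problem equisatisfiable while making $0$ a fork index. Correctness of both reductions is inherited from Lemma~\ref{lem:hardness}, and $\Sigmap{2k-1}$/$\Pip{2k}$ prenex QBF-CNF truth is complete for these classes.

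\textbf{Main obstacle.} I expect the hard step to be controlling the fork indices of the translated formula exactly. Translated clauses may share only part of a block; for example, $U_1$ may contribute an unshared last variable, which would create extra, intermediate fork indices. I will handle this by the padding or reordering just described, forcing every pair of literals to share the full universal prefix of the lower level. To justify that the padding is harmless, I will argue that adding a dummy shared universal argument does not change satisfiability, since Observation~\ref{lem:semantic}-style instantiation lets any constant be chosen.
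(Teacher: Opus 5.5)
Your proposal is correct and follows the same route as the paper. Inclusion comes from Lemma~\ref{lem:ph}. Hardness comes from restricting the translation of Section~\ref{sec:qbf2epr} to prenex QBFs with $2k-1$ blocks starting with $\exists$, resp.\ $2k$ blocks starting with $\forall$, both ending in an existential block. The paper derives the CNF matrix by a Tseitin step whose new variables go into the last block; you cite this as the standard completeness result, which is fine.

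The one substantive difference is your padding with a never-instantiated dummy universal $d_j$ at the end of each block $U_j$, and it is not cosmetic. The paper simply asserts that the positive fork indices of the translated formula are exactly the partial sums $\sum_{j\le m}|U_j|$. That fails once universal literals are resolved away. Under $\forall u_1u_2\,\exists e$, the clause $e\vee u_2$ becomes $e(u_1,\textsf{false})$, whose pair with itself has fork index $1$. Meanwhile $\neg e$ becomes $\neg e(u_1,u_2)$ with fork index $2$. So a $\Pip{2}$ instance lands outside the class with at most one fork index. Similarly, $e\vee u_1\vee u_2$ yields a ground literal and hence a spurious index $0$. Your padding makes every pair of literals of levels $j\le j'$ share $d_1,\dots,d_j$, with $d_j$ sitting at the padded block boundary. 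That is exactly what forces at most $k$ fork indices, and it excludes $0$ in the $\Pi$ case. Equisatisfiability of the padding is cleanest at the QBF level: adding universal variables that occur in no clause preserves truth, so Lemma~\ref{lem:hardness} applies verbatim.

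For membership, two small points should be made explicit.
\begin{itemize}
\item In the $\Pi$ bullet, a separable clause forces $0$ to be a fork index, since in QEALM separable means some pair of literals shares no variable. That leaves at most $k-1$ positive indices, so you get $\Sigmap{2k-1}\subseteq\Pip{2k}$.
\item In the $\Sigma$ bullet, index $0$ can also come from a ground literal paired with itself without any clause being separable. Your claim that $0$ forces a leading existential search is then not literally true. But if all clauses are inseparable you land in $\Pip{2k-2}\subseteq\Sigmap{2k-1}$, so nothing breaks.
\end{itemize}
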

\begin{proof}
	Using Lemma~\ref{lem:ph} we get inclusion.
	Hardness can be shown by transforming a Prenex QBF with a bounded number of alternations to \fragment and while staying within the boundaries of side conditions in the Theorem statement.
	
	Suppose we have a Prenex QBF which starts with an existential quantifier and then has $2k-1$ quantifier blocks $\exists_B X_1 \forall_B U_2 \exists_B X_2 ... \exists_B X_k$ (this is $\Sigmap{2k-1}$-complete) or we start with a universal quantifier and have $2k$ quantifier blocks $\forall_B U_1 \exists_B X_1 \forall_B U_2 \exists_B X_2 ... \exists_B X_k$ (this is $\Pip{2k}$-complete).
	In either case let it end with an existential block. After the prefix, there is the propositional part known as the matrix. We can use a Tseitin transformation to change it into conjunctive normal form.
	The transformation introduces new existentially quantified variables that are to be placed in the final existential block after all other variables. This preserves satisfiability, as any winning strategies can be extended to play the new variables according to their definitions. Therefore, after the Tseitin transformation, we retain the same number of quantifier blocks.
	
	We can then use the conversion to the \fragment as given in \sloppy Section~\ref{sec:qbf2epr},
	after which we observe the following: $0$ is a fork index if and only if the QBF begins with an existential quantifier and $i>0$ is a fork index if and only if there is some $m>0$
	such that the sum $\sum^{m}_{j=1} |U_j|=i$.
	Hence there are going to be $k$ fork indices in total and the condition on the $0$-th index is also respected.
\end{proof}

The above hardness results, use the accessible transformation using \texttt{qbf2epr}, other translations can be used to prove stronger results. 
Namely, that the intersection of \krom, \horn, \affine and the \fragment remains \PSPACE-complete.

\begin{theorem}\label{thm:intersection}
	The \fragment is still \PSPACE-hard with the following restrictions.
  \begin{enumerate}[nosep]
		\item Every clause has size at most 2 (is Krom).
		\item Every clause has at most one positive literal (is Horn).
		\item For every 2-element clause $a \rightarrow b$, also the opposite implication $b \rightarrow a$ is present among the clauses (is Affine).
	\end{enumerate}
\end{theorem}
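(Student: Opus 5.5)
The plan is to reduce from QBF truth (PSPACE-complete), but not through the \texttt{qbf2epr} translation used in Lemma~\ref{lem:hardness}. The first step is to see what the three restrictions together leave us. A Horn clause of size two is either $\lnot a\vee b$ or $\lnot a\vee\lnot b$. In the second case the affine condition would force the converse implication $\lnot b\rightarrow a$, that is the clause $a\vee b$, which is not Horn. So we will only use two kinds of clauses:
\begin{itemize}[nosep]
\item unit clauses;
\item pairs $\lnot a\vee b$, $\lnot b\vee a$, i.e.\ equivalences $a\leftrightarrow b$ between positive atoms.
\end{itemize}
For such a formula, a ground equivalence graph is one whose nodes are ground atoms and whose edges are ground instances of the equivalences. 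The formula is unsatisfiable iff some positive unit and some negative unit lie in the same connected component of this graph. Hardness therefore amounts to encoding QBF evaluation as \emph{undirected} reachability in an exponentially large graph that is described succinctly by QEALM clauses.

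The construction simulates the standard depth-first recursive evaluation of a prenex QBF $Q_1x_1\dots Q_nx_n\,\psi$ as a stack machine. The stack holds the current partial assignment, and it is stored as the argument list, one constant from $\{\textsf{true},\textsf{false}\}$ per variable. The control state is written into the predicate symbol. This state records the depth, the phase (descending, returning with value $\top$, returning with value $\bot$) and whether we are in the first or second branch of the top variable. The moves are then:
\begin{itemize}[nosep]
\item \emph{Push}, written as $s_k(u_1,\dots,u_k)\leftrightarrow s'_{k+1}(u_1,\dots,u_k,\textsf{false})$.
\item \emph{Pop}, or overwrite of the last argument, written in the same way.
\end{itemize}
Every shared variable $u_j$ first occurs at the same position $j$ in both literals, and all positions before it are identical variables. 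Only the last position carries differing constants, so Definition~\ref{def:fragment} holds. The clauses are Krom, Horn and affine by construction.

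The quantifiers are handled by small gadgets that combine the returned values. For an existential $x_k$: if the $\textsf{false}$ branch returns $\top$, propagate $\top$ upwards; if it returns $\bot$, switch to the $\textsf{true}$ branch. Universal quantifiers are treated dually.

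At a leaf we check the matrix clause by clause, using a counter $j$ in the predicate name while keeping the arguments $u_1,\dots,u_n$ unchanged. For clause $C_j$ we add equivalences to two targets:
\begin{itemize}[nosep]
\item to $\mathit{leaf}_{j+1}$, by one pattern per literal of $C_j$, in which the corresponding argument is fixed to the satisfying constant and all others are variables;
\item to $\mathit{ret}_\bot$, by the single pattern that falsifies $C_j$.
\end{itemize}
The satisfying patterns may overlap with each other, but they all lead to the same target. They are disjoint from the falsifying pattern on ground instances. Hence every ground node still has exactly one successor.

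Finally we add the positive unit $s_0()$ for the start configuration and the negative unit $\lnot\mathit{ret}^{0}_\bot()$ for the root rejecting. With these units the formula is unsatisfiable iff the QBF is false, and complementing the answer is harmless since \PSPACE{} is closed under complement. Alternatively, by the symmetric choice of units the formula is satisfiable iff the QBF is true.

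The main obstacle is that undirected connectivity only simulates directed computation when the configuration graph is \emph{reversible}, meaning both in- and out-degree are at most one. Otherwise two computation paths can merge and connect the accepting start node to a rejecting node spuriously. The recursive evaluator as sketched is deterministic, but it is not obviously reversible. For example, the ``return $\top$ from an existential'' state can be entered from either branch. The first thing to try is to make the evaluator reversible by design:
\begin{itemize}[nosep]
\item keep the branch bit on the stack (it is already the last argument) and make the return state record which branch it came from, so that predecessors are unique;
\item have the root of the computation graph be a path endpoint.
\end{itemize}
If local fixes become too intricate, the fallback is to invoke the general fact (Lange--McKenzie--Tapp) that deterministic space can be simulated reversibly. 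In that case we would check that the resulting stack-based transitions still have the push/pop-at-the-end shape, which is what keeps us inside the \fragment.
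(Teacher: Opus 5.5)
Your proposal has a genuine gap, at exactly the point you flag yourself. You assert that the formula is unsatisfiable iff the QBF is false. You then concede that this needs the configuration graph to be reversible, and you never establish it. Neither remedy closes the gap as written.
\begin{itemize}
\item Recording in the return state which branch produced the value only relocates the merge. The next move out of that state (a pop, or the switch to the other branch one level up) again has two predecessors unless that bit is carried along too. Carrying all such bits needs up to $n$ bits of control, i.e.\ exponentially many predicate symbols.
\item The Lange--McKenzie--Tapp simulation acts on Turing-machine configurations, whose transitions rewrite the tape at the head position. So the push/pop-at-the-end shape that keeps you inside QEALM is precisely what is in doubt, and you leave that check undone.
\end{itemize}

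The missing idea is that reversibility is not needed; determinism suffices. This is the standard argument that deterministic logspace is contained in symmetric logspace. Orient each ground instance of an equivalence from the earlier to the later control state. Suppose every ground atom has at most one distinct out-neighbour. Then a connected component of the ground equivalence graph with $m$ atoms, $s$ of them sinks, has at most $m-s$ distinct edges. Connectivity requires $m-1$ edges, so each component contains at most one sink, and merging paths cannot create a spurious connection. Now $\mathit{ret}^0_\top()$ and $\mathit{ret}^0_\bot()$ are sinks, and the depth-first run from $s_0()$ halts in one of them. Hence $s_0()$ and $\mathit{ret}^0_\bot()$ are connected iff the QBF is false. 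Out-degree at most one must hold for all ground atoms over $\{\textsf{true},\textsf{false}\}$, not only reachable ones, but your gadgets and leaf check already give this.

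With this lemma your stack-machine reduction becomes a valid proof, genuinely different from the paper's. The paper uses a single $(2n+1)$-ary predicate whose universal positions are free variables, so universal choices become choices of path. Each type-(2) equivalence joins all atoms with $x_k=1$ and $0$s in the later existential positions and the last position to all atoms with $x_k=0$ and $1$s there. The paper's graph is therefore far from deterministic. Its correctness is proved directly on undirected paths, by induction on quantifier depth, with a path-shortening step that keeps a single change of $x_{k+1}$. Your route needs $O(n+m)$ predicate symbols but reduces correctness to a standard graph fact; the paper's needs one predicate but a bespoke path argument.
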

\ifdefined\Fversion
\begin{proof}
	See Appendix~\ref{app:proof} for a full proof.
\end{proof}
\fi
On the other hand we can relax the ordering found the \fragment, to categorise a wider range of \PSPACE\ problems that may be of practical interest. 

\begin{definition}[QEALM re-arrangeable]\label{def:rearrange}
	A \bs problem $\phi$ is \emph{QEALM re-arrangeable} if for every predicate symbol $p$ there is a permutation $s_p: [\arity(p)]\rightarrow [\arity(p)]$, such that if we rearrange the order of arguments of every occurrence of every $p$ and do so for every predicate, then we have a QEALM problem.
\end{definition}

QEALM re-arrageability is a similar notion to HORN-renameability. The satisfiability of QEALM re-arrangeability is also \PSPACE-complete, because we can enumerate all possible symmetries with one additional alternation. While the detection of whether an instance is in QEALM is in polynomial time in the worst case, the detection of QEALM re-arrangeable problems may not be. But we would hope in practice there will be many easily detectable cases.

Finally, on the theoretical side, we present an important first-order closure property.%
\begin{theorem}
	The \fragment is closed under the rules of Robinson's resolution. \label{thm:closure}
\end{theorem}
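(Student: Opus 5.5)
We show separately that each rule of Robinson's resolution, applied to premises whose clauses satisfy the two conditions of Definition~\ref{def:fragment}, yields a conclusion that satisfies them too. Since the QEALM condition is checked clause by clause, adding the conclusion to a \fragment problem keeps it in the fragment. Renaming premises apart is harmless, because the condition is invariant under injective renaming of variables. So the whole argument reduces to the behaviour of the most general unifier $\sigma$.

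The first step is a structural lemma about $\sigma$. Call an atom $p(t_1\dots t_n)$ \emph{prefix-linear} if every variable's first occurrence position is well defined and later occurrences are repetitions. We claim that the mgu of two atoms $p(\bar t)$ and $p(\bar s)$ can be computed by scanning positions left to right. At position $j$ there are three cases:
\begin{itemize}
\item if one side has a variable that has not yet been bound, bind it to the other side's term at position $j$ (a constant or a variable);
\item if both sides already have determined terms, they must agree, since otherwise unification fails;
\item two constants must simply be equal.
\end{itemize}
Thus $\sigma$ only maps variables to constants or to variables, i.e.\ it is a function-free substitution identifying some variables and grounding others. We then describe $\sigma$ as a composition of elementary steps of two kinds: instantiating a variable by a constant, and identifying a variable $x$ of one premise with a variable $y$ of the other premise. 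Here the key fact is that $x$ and $y$ both first occur at the same position $j$ of the pivot atoms, and the arguments at positions $<j$ agree after the earlier steps.

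The second step shows that the QEALM condition survives these steps. Instantiation by a constant preserves the condition by the same argument as Lemma~\ref{lem:ulem}: agreement of initial segments is preserved, and removing variables only removes constraints. For the identification step, let $x$ be in $C_1$ and $y$ in $C_2$, with both first occurring at position $j$ in the pivots. Within $C_1$, every literal containing $x$ has $x$ first at position $j$ and shares the pivot's initial segment up to $j$, by condition 2. The same holds for $y$ in $C_2$. After $\sigma$ has been applied to positions $<j$, these segments coincide. Hence in $C_1\sigma\vee C_2\sigma$ all literals containing the merged variable have it first at position $j$ with a common prefix, which gives conditions 1 and 2 for that variable. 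Variables occurring only in one premise keep their old witnesses, up to the substitution, which again preserves equality of segments. Factoring is the same argument with both atoms taken from one clause. The case where a variable is bound to a variable of the \emph{same} premise also needs to be covered. This happens when, at a position where one side has a fresh variable, the other side has an earlier-bound variable, and that is exactly a repetition. Condition 2 ensures that the resulting coincidence sits at a position where both were already constrained.

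The main obstacle is the interaction just mentioned. A variable $z$ in $C_1$ may first occur at position $j$ in some non-pivot literal, while in the pivot it is bound through repetition to a variable whose first position is earlier. We must then check that after substitution $z$'s replacement does not acquire two different first positions across literals. Our plan is to argue from condition 2 in $C_1$: any literal containing $z$ shares the pivot's prefix up to $z$'s first position. So, in the pivot, $z$ occurs at its first position $i$ and the preceding arguments are common. The replacement term is therefore the one the pivot partner has at position $i$. If this is a variable $w$ first occurring earlier, then $w$ already appears in the common prefix of all those literals. Consequently the first position of $w$ is uniform, namely its position in that shared prefix. We would verify this carefully by induction over the left-to-right unification scan, maintaining the invariant that after processing position $j$, the partially substituted clause $C_1\vee C_2$ satisfies the QEALM condition restricted to variables whose first position is $\le j$.
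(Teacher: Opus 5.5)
There is a genuine gap in your structural lemma about the mgu, and the inductive plan rests on it. You claim that when both sides at position $j$ already carry previously seen terms, ``they must agree, since otherwise unification fails''. This is false whenever those terms are two distinct variables, because unification then merges their classes.

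Take the QEALM clauses $p(x,y,x)\vee q(x,y)$ and $\lnot p(v,w,w)\vee r(v,w)$. Your scan identifies $x$ with $v$ at position~1 and $y$ with $w$ at position~2. At position~3 neither $x$ nor $w$ is fresh, yet the atoms are still unifiable, and the mgu collapses all four variables into one. That step identifies variables whose first positions are $1$ and $2$, and it rewrites the already processed position~2.

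This step is none of the cases you list. It is not a constant instantiation, not a same-position cross-premise identification, and not your fresh-variable-against-repetition case. So your decomposition of $\sigma$ into elementary steps is incomplete, and the ``key fact'' that identified variables first occur at the same position does not hold. Because such merges act retroactively on positions $<j$, your left-to-right invariant is not established by the step analysis you describe. The resolvent $q(x,x)\vee r(x,x)$ is still in the fragment, so only the argument is affected, not the theorem.

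The repair is to drop the scan and apply the idea of your last paragraph once, to the final unifier. Let $A=p(\bar t)\sigma=p(\bar s)\sigma$, and split the resolvent's variables into two cases.

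\emph{Variables not occurring in the pivots.} Such a variable lies outside $\mathrm{dom}(\sigma)\cup\mathrm{vran}(\sigma)$. Hence all literals containing it come from one premise, and they keep their common first position and common prefix.

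\emph{Pivot variables.} Each resolvent literal containing such a $u$ has the form $l\sigma$, where $l$ contains a pivot variable $v$ of its own premise with $v\sigma=u$. By condition~2, $l$ agrees with that pivot on positions $1,\dots,i_v$, so $l\sigma$ agrees with $A$ there. Since $u$ occurs in $A$ at position $i_v$, its first position $m\le i_v$ in $A$ is also its first position in every such $l\sigma$. The first $m$ arguments of $A$ then form the common prefix.

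Factoring is identical, with both atoms taken from one clause. This avoids tracking the order of unification steps altogether. It is also more direct than the paper's proof, which analyses pairs of resolvent literals according to whether their shared variable was created by the unifier.
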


\begin{proof}
	
	We can argue that Robinson's rules Resolution and Factoring preserve this class. 
	
	\noindent\textbf{Resolution.}
	Let us use uppercase $C,D,E$ for clauses,  $p,q,r$ for predicate symbols and lowercase greek letters for argument sets. 
	Suppose $C\vee p(\alpha)$ resolves with $D\vee \lnot p(\beta)$ to get $E$, let us take two literals $q(\gamma)$ and $r(\delta)$ in $E$ and suppose they have some shared variable $u$, the first occurrence of $u$ in $q$ is argument number $i$, and in $r$ is $j$.
	This can only occur if these first $u$ occurrences
	1) were equal before the resolution, or 
	2) unified as a result of the resolution step. 
	
	Before instantiation $q(\gamma)$ and $r(\delta)$ exist as literals $q(\epsilon)$ and $r(\zeta)$.
	We generally assume that two clauses share no variables between them until after unification. So case 1 only occurs when the $q(\epsilon)$ and $r(\zeta)$ already appear in the same clause and so $i=j$ and all variables $<i$ are already the same by QEALM rules. The further instantiation that happens inside the resolution step cannot make these variables different.
	In the second case the unification occurs because of the instantiation of the two clauses, the clause $C\vee p(\alpha)$ is instantiated to effect the variables of $\alpha$ and all other variables in $C$ that are equal to those in $\alpha$, likewise with $D$ and $\beta$.
	
	We then argue in case 2  that $q(\epsilon)$ and $r(\zeta)$ are not from the same clause. WLOG we assume that they both are in $C$ and  $i\leq j$. 
	Let $v$ be the $i$-th argument of $\epsilon$ and $w$ be the $j$-th argument of $\zeta$, at least one of these has to change as a result of resolution to match the other, but this only occurs if both  $v$ and $w$ occur in $\alpha$ otherwise the unifier is not the most general. %
	Then by the definition of \fragment and our assumption that $i\leq j$, $v$ must occur in $\zeta$ and thus we are in Case 1.
	So without loss of generality $q(\epsilon)$ is in $C$ and $r(\zeta)$ is in $D$ and $\epsilon$ has a variable $v$ which also occurs in $\alpha$ (occurring first in the $i$-th argument) and $\zeta$ has a variable $w$ also occurring in $\beta$ (first in the $j$-th argument) and $v$ and $w$ will become the same variable $u$.
	
	In order for $v$ and $w$ to both become the same variable they need to be part of some branch of variables that share a common argument index across $\alpha$ and $\beta$. Such a path will always unify to the same term, and if a constant term appears in the path it cannot be a variable. 
	If we suppose $i<j$ there is some term $t$ in the $i$-th argument of $\zeta$ (and thus in the $i$-th argument of $\beta$) that does not become $u$. If $t$ is constant then $v$ cannot become $u$ under the unification, so $t$ must be a variable, and it must become $u$, like $v$ under instantiation and therefore $j\leq i$. This works symmetrically and so $j=i$ is the only possibility.
	Therefore for $k<i$ all the arguments of $q(\epsilon)$ in $C$ are the same as the arguments of $p(\alpha)$ and likewise all the $k<i$ arguments of $y(\zeta))$ in $D$ are the same as $\lnot p(\beta)$. Therefore after instantiation these also become equal. 
	
	\textbf{Factoring.} In factoring, the clause is most generally instantiated so that two literals $p(\alpha)$ and $p(\beta)$ under the same predicate $p$ can become identically unified to $p(\gamma)$ . 
	Because of the condition of the fragment, there is some maximal $i$ (possibly zero) such that for $j\leq i$ all variables already are identical and for $j>i$ all variables are different. 
	Therefore the instantiation will only occur on the arguments $j>i$ that correspond to these different variables. 
	Thus for any other literal $y(\delta)$ in the clause they become affected if they share a variable with $p(\alpha)$ (or $p(\beta)$ but WLOG let us assume $x(\alpha)$), in which case the first argument $j>i$ where the variable appears is the same.
	All previous variables  to $j$ in $\delta$ are already the same as in  $\alpha$. Furthermore those $\delta$ variables before argument $i$ already match with $\beta$.
	Any of the $\delta$ variables before $j$ get the identical treatment to $\alpha$ and agree with $\gamma$. 
	This means if any pair of literals that shares a variable $u$ is created from this instantiation they will have already agreed on the variables $k\leq i$ and will be instantiated to equal on the variables after $i$ up to the argument containing $u$.
\end{proof}

\krom and \horn fragments of \bs are also closed under Robinson's resolution. 
QEALM re-arrangeability is also closed under Robinson's resolution as the rules commute with the application of global symmetries on the arguments. 
Note that the translation of QBF in Lemma~\ref{lem:hardness} can not be closed under Robinson's resolution, because resolution steps produce formulas in QEALM that have no preimage in the QBF translation. The original  translation  \texttt{qbf2epr} is also not closed under Robinson's resolution.

\section{Software and Experimentation}

\subsection{Implementing a QEALM Detector} \label{subsect:detect}
In the TPTP library~\cite{Sut17} version 9.1.0, in its CNF subset, there are {936} \bs problems.
We wrote a small extension of the Vampire theorem prover~\cite{DBLP:conf/cav/BartekBCHHHKRRRSSV25} (mainly, to reuse its efficient parser)
to test for membership in the \fragment{} and establish the corresponding number of fork indices.\footnote{Available at \url{https://github.com/vprover/vampire/tree/martin-epr-fragment}.}

We discovered {308} \fragment{} problems of which 1)
50 are purely propositional (\# fork indices $\leq$ 1),
109 monadic (\# fork indices = 2), and
49 contain predicates with up to two ``variable-active'' argument positions (\# fork indices = 3).\footnote{
	For example, the problem \texttt{MGT041-2} from this group contains a predicate %
	of arity 3, but there a variable never occurs at its third argument position.}
Many of these last named 49 problems are ALC problems translated from propositional multi-modal K
logic formulae generated according to the scheme described by Hustadt and Schmidt \cite{DBLP:conf/ijcai/HustadtS97}
(e.g., the problem \texttt{SYN521-1}).
The problems with higher (in principle unbounded) number of fork indices are from the following two families:
\begin{itemize}
	\item
	binary counter formulas by P{\'{e}}rez and Voronkov~\cite{DBLP:conf/birthday/PerezV13} (which are also \krom),
	e.g., the problem \texttt{MSC015-1.005}, and
	\item
	optimised functional translations of modal QBF formulae by Hustadt and Schmidt~\cite{DBLP:conf/tableaux/HustadtS00},
	e.g., the problem \texttt{SYN870-1}.
\end{itemize}

It is possible that more thorough preprocessing of the input would detect more examples. We only used the default argument ordering of the predicates.
We also remark that our procedure does not detect 81 \bs problems translated to TPTP by \texttt{qbf2epr}~\cite{SLB12}.
Strictly speaking, they do not belong to the \fragment{}, although by eliminating the
two---always present---clauses $p(\mathit{true})$, $\neg p(\mathit{false})$, by exhaustively resolving against all other $p$-literals as described in Sect.~\ref{sec:qbf2epr}
(and simply dropping the superfluous clause $\mathit{true} \neq \mathit{false}$), the membership could be recovered.

\subsection{Solving QEALM with Vampire}

To probe the landscape of the discovered QEALM problems in TPTP (cf.~Sect.~\ref{subsect:detect}), we tried to solve each of them using Vampire.
Here are our two main findings:
\begin{itemize}
\item
	While Vampire's default strategy exhibits exponential behavior on the binary counter formulas and produces the obvious ground proof which ``actually does the counting'' (by deriving exponentially many unit clauses),
	by changing the ordering and literal selection function\footnote{Choosing \texttt{--kbo\_weight\_scheme precedence} and a size-preferring \texttt{--selection 2} (see \cite{DBLP:conf/cade/HoderR0V16}).}
	the prover can be nudged to run in what appears to be polynomial time (roughly $\Theta(k^{3.5})$, where $k$ is the number of the counter's bits)
	 and to produce non-ground proofs which avoid the exponential worst-case.

\item
	On the translations of modal QBF formulae, the number of fork indices is a good predictor of
	the effort needed to solve the formula by saturation (unlike $C$, the number of clauses in the source QBF in the TPTP header):
Over the $90$ of $92$ such problems that Vampire solved under a fixed budget of 
$1.28\cdot 10^{11}$ instructions, the Spearman rank correlation between the
number of fork indices and the logarithm of the instructions burned is
$\rho = +0.76$ ($p < 10^{-17}$), the median cost rising from $548\cdot 10^{6}$
instructions at $21$ fork indices to $25\,577\cdot 10^{6}$ at $57$, whereas for
$C$ it is $\rho = -0.05$ ($p = 0.64$).
\ifdefined\Fversion
\footnote{See Appendix~\ref{app:plot} for a corresponding plot.}
\fi
\end{itemize}

Beyond these two findings, the space behaviour of the prover deserves a remark.
Across the whole collection, Vampire's peak memory stayed in the tens of
megabytes on the large majority of the problems, and it grew markedly more
slowly than the running time: over the range of fork indices spanned by the QBF
translations, the instructions burned grow by an order of magnitude more than
the memory consumed. We do not offer this as empirical evidence of the
\PSPACE\ bound---saturation carries no space guarantee, and the runs that
exhausted the instruction budget did so while occupying hundreds of megabytes---
but it does suggest that on \fragment{} problems the clause sets that a
saturation-based prover must keep in memory remain modest, and that time rather
than space is the binding resource in practice.
\ifdefined\Fversion
\footnote{See
Appendix~\ref{app:memory} for more details.}
\fi

Currently, there are no dedicated practical QEALM solvers. Algorithm~\ref{alg:solvefrag} presents a \PSPACE\ approach, but lacks any heuristics and optimisations necessary to compete with Vampire.
Perhaps one positive indication that we can hope to solve QEALM problems more practically, is the closeness to QBF. 
There are of course a number of directions that can be taken in future steps, all of which can be inspired by QBF solving. 

\section{Conclusion}
This work  widens our understanding of first-order logic by finding a new \sloppy \PSPACE-complete fragment, that shares many similarities to QBF. 
Remarkably, 
despite not being defined on propositional connectives, the \fragment has multiple properties shared by \krom and \horn. QEALM instances can be detected per-clause,  QEALM is closed under Robinson's resolution, and retains hardness  when intersected with \krom, \horn and \affine  (unlike QBF~\cite{Buning1995,Remshagen}). %

Part of our motivation is to utilise the \PSPACE-completeness and connection to QBF for benefits in solving, certification and theory.
At present our recursive solving algorithms shows \PSPACE\xspace inclusion, but lacks 
optimisations we would need to use it practically. 

One area of future interest that can also take advantage of is QBF proof complexity, as our understanding of it has grown substantially over the last decade. 
For example, some QBF proof systems are known to have super-polynomial lower bounds only if either proof complexity or circuit complexity open problems are solved~\cite{BBCP20}. In theory, a proof system with this property exists in every \PSPACE\xspace language through a QBF translation, but finding one that is natural and suitable for applications is more difficult. %
Unfortunately, Robinson's resolution is known to have exponential lower bounds even for the \fragment. In fact, it runs foul of something similar to a well-known QBF proof complexity pitfall known as ``semantic hardness''~\cite{BBH19}, which we aim to demonstrate in future work.

\subsection*{Acknowledgements}
This work was supported by the European Union under the project ROBOPROX (reg.\
no.\ CZ.02.01.01/00/22\_008/0004590) and by the Czech Science Foundation project 24-12759S.

\label{sect:bib}
\bibliographystyle{plainurl}

\newpage

\ifdefined\Fversion
\appendix
\section{Proofs from Section~\ref{sec:frag}}

\label{app:proof}

\begin{theorem}
	The \fragment is still \PSPACE-hard with the following restrictions.
	\begin{enumerate}
		\item Every clause has size at most 2.
		\item Every clause is Horn.
		\item For every 2-element clause $a \rightarrow b$, also the opposite implication $b \rightarrow a$ is present among the clauses.
	\end{enumerate}
\end{theorem}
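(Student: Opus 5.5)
The plan is to reduce from a problem where unsatisfiability is reachability in an exponentially large \emph{undirected} graph whose edges only ever change a \emph{suffix} of the argument list. Conditions 1--3 force the clause set to consist of unit clauses and bi-implications $\lnot a\vee b$, $\lnot b\vee a$ between single atoms. Such a set is unsatisfiable if and only if, in the grounded equivalence graph, some positive unit is connected to some negative unit. So the task is to encode a \PSPACE-complete undirected reachability problem while keeping every binary clause inside the \fragment. For two literals sharing variables, Definition~\ref{def:fragment} says roughly this: shared variables must sit at the same positions, and everything before their first occurrence must coincide. The design constraint is therefore that in each bi-implication the two atoms agree on an initial segment of arguments containing all shared variables, and differ (in predicate symbol, in constants, or in extra fresh variables) only after it. The Pérez--Voronkov binary counter formulas from Section~4 have exactly this shape: an increment rewrites $x_1\dots x_k\,0\,1\dots 1$ into $x_1\dots x_k\,1\,0\dots 0$, with the variables in a common prefix and only a constant suffix changing. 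That is the model I would generalise.

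Concretely, I would start from a closed prenex QBF $\Psi=Q_1z_1\dots Q_nz_n\,\phi$ and simulate its evaluation by a \emph{deterministic and reversible} depth-first traversal of the assignment tree. A configuration is the current partial assignment $z_1\dots z_i$ together with a small amount of control information. The control information is the value returned from the subtree just finished and the phase (descending, first child returned, second child returned), and it is encoded in the predicate symbol $S_{i,\text{phase},\text{val}}$ of arity $i$. Each traversal step touches only the deepest variable, which is the last argument. Descending appends an argument, returning deletes it, and switching from the $0$-child to the $1$-child rewrites it, so every step is a bi-implication whose two atoms agree on the prefix $z_1\dots z_{i-1}$ (as variables) and differ only at the end.

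The first thing to check is that each step is QEALM. Take for instance $S_{i,\downarrow}(z_1..z_{i-1},0)\leftrightarrow S_{i+1,\downarrow}(z_1..z_{i-1},0,z)$, or the backtracking step combining two child values into the $Q_i$-value of the parent. In each such clause the shared variables occupy identical leading positions, so Definition~\ref{def:fragment} holds. The leaf check is phrased via the clauses of $\phi$: at depth $n$ the arguments are constants, so those clauses are ground and trivially in the fragment. I would make the leaf check reversible by recording the value in the predicate name, and make value-merging reversible by keeping the children's results in the control state rather than erasing them. Finally, I would add a positive unit $S_{0,\text{start}}()$ and a negative unit $\lnot S_{0,\text{done},\bot}()$. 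Then the formula is unsatisfiable if and only if the traversal from the start reaches ``done with value $\bot$'', that is, if and only if $\Psi$ is false. Since \PSPACE\ is closed under complement, this gives \PSPACE-hardness.

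The main obstacle is reversibility: an undirected (affine) edge set must not create spurious paths. Every configuration must have at most one successor and one predecessor, and backward edges must not merge distinct traversals. The difficulty is that returning from a subtree normally forgets information, namely the child's value once the parent's value is computed. I would handle this Bennett-style, by carrying enough bits in the predicate name to make each step injective, or by encoding the traversal as a Gray-code-like counter that is reversible by construction. I would then verify, by an argument on the ground graph, that connected components are simple paths, so connecting the two units happens exactly in the intended case. Closure details (Horn: each clause is a unit or has one positive literal; Krom: size $\le 2$) are then immediate.
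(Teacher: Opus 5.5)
\textbf{There is a genuine gap, and it sits exactly at the step you flag as the main obstacle.} Under conditions 1--3, unsatisfiability becomes undirected connectivity between a positive and a negative unit. So the whole correctness argument is that the ground equivalence graph has no unintended connections, and you only promise this (``Bennett-style \ldots\ or a Gray-code-like counter \ldots\ I would then verify'').

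The traversal you actually describe is not injective. At an $\exists$-node with short-circuiting, ``first child returned $\top$'' and ``second child returned $\top$'' lead to the same return state. Tagging which case happened only pushes the lost bit up to the parent, so the history to be kept grows with the depth and does not fit into a polynomial set of predicate symbols. Without short-circuiting, merging two child values is non-injective in the same way.

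Full reversibility is in fact unnecessary. Suppose every ground atom has at most one forward successor, and both $S_{0,\text{done},\top}()$ and $S_{0,\text{done},\bot}()$ are sinks. Then a weakly connected component on $V$ atoms has at least $V-1$ and at most $V$ edges, hence at most one sink. So the start atom is connected to $S_{0,\text{done},\bot}()$ iff the (terminating) deterministic run ends there.

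That argument requires every step to be functional, and two of yours are not:
\begin{enumerate}
\item Your descending clause appends a fresh variable $z$, which links a state to both children at once.
\item The leaf check cannot be ``ground clauses at depth $n$'', since that is $2^n$ clauses. Nor can it be a single Krom clause per leaf, since satisfaction of $\phi$ is a conjunction over its clauses. You would need to iterate over the clauses of $\phi$ with a clause index in the predicate symbol, using for each clause $C$ atoms that carry constants exactly at the positions of the variables of $C$.
\end{enumerate}

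\textbf{The paper's proof avoids determinism altogether}: a path is a certificate of falsity rather than a simulation of evaluation. It uses one $(2n+1)$-ary predicate $p(x_1,y_1,\ldots,x_n,y_n,b)$ with the following clauses.
\begin{enumerate}
\item For each clause $C$ of $\phi$, the equivalence $p(t^C,0)\leftrightarrow p(t^C,1)$, where $t^C$ puts the falsifying constants at the variables of $C$.
\item For each $k$, an equivalence joining the end of the $x_k=0$ subtree (with $x_k=0$ and all later $x$'s and $b$ equal to $1$) to the start of the $x_k=1$ subtree (with $x_k=1$ and all later $x$'s and $b$ equal to $0$). The $y$-positions from $k$ on hold unshared variables on the two sides, so the universal values may be chosen anew.
\item The units $p(0,y_1,\ldots,0,y_n,0)$ and $\lnot p(1,y_1,\ldots,1,y_n,1)$.
\end{enumerate}
Nondeterminism is harmless there because only the level-$k$ equivalence changes $x_k$. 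A path may therefore be assumed to use it once (delete everything between its first and last use). Induction on $k$ from $n$ down to $0$ then shows that a path exists iff the QBF is false.

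Your deterministic-traversal route could be completed with the one-sink counting argument above, together with a functional leaf check. As written, however, the key claim that no spurious paths exist is unproven.
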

\begin{proof}
	We describe a polynomial reduction of QBF to this fragment of \bs. 
	Consider a QBF
	$$
	\exists_B x_1 \forall_B y_1 \exists_B x_2 \forall_B y_2 \ldots \exists_B x_n \forall_B y_n
	\cnf(x_1, y_1, \ldots, x_n, y_n),
	$$
	where $\cnf$ is a CNF-formula. For an initial list of variables $x_1, \ldots, y_k$, let us denote
	$\qbf(x_1, \ldots, y_k)$ the value of the formula above with the first $2k$ variables fixed.
	So $\qbf()$ is the initial QBF instance, and $\qbf(x_1, \ldots, y_n) = \phi(x_1, \ldots, y_n)$.
	
	We construct the equivalent \bs problem using a single $(2n+1)$-ary predicate $p$,
	and the following constraints:
	
	\begin{enumerate}
		\item 
		For each clause $C$ in $\phi$ we write the equivalence
		$$p(t^C_{x_1},t^C_{y_1},t^C_{x_2},t^C_{y_2},\ldots,t^C_{x_n},t^C_{y_n},0) \leftrightarrow p(t^C_{x_1},t^C_{y_1},t^C_{x_2},t^C_{y_2},\ldots,t^C_{x_n},t^C_{y_n},1)$$
		
		Where:
		
		$$t^C_{v}= \begin{cases} 0 & \text{ if positive literal } v \in C \\
			1 & \text{ if negative literal } \lnot v \in C  \\
			v & \text{  otherwise }
		\end{cases}$$
		
		So for example, if the clause is $y_1\vee x_2 \vee \neg y_1$, then we write
		$$p(x_1,0,0,1,x_3\ldots,x_n,y_n,0) \leftrightarrow p(x_1,0,0,1,x_3\ldots,x_n,y_n,1)$$
		\item For each $k = 1, \ldots, n$, we write the equivalence
		$$p(x_1,y_1,\ldots,x_{k-1},y_{k-1},1,y_k,0,y_{k+1},\ldots,0,y_n,0) \leftrightarrow$$
		$$p(x_1,y_1,\ldots,x_{k-1},y_{k-1},0,z_k,1,z_{k+1},\ldots,1,z_n,1) $$
		\item A positive singleton $p(0,y_1,0,y_2,\ldots,0,y_n,0)$
		\item A negative singleton $\neg p(1,y_1,1,y_2,\ldots,1,y_n,1)$
	\end{enumerate}
	
	To prove that this \bs formula is equivalent to $\qbf()$, we frame this \bs formula as a reachability problem.
	The \bs formula is contradictory if and only if there exists values $y_1, \ldots, y_n$, $ z_1, \ldots, z_n$, and a path on ground instances which:
	\begin{itemize}
		\item starts at $p(0,y_1,0,y_2,\ldots,0,y_n,0)$
		\item ends at $p(1,z_1,1,z_2,\ldots,1,z_n,1)$, and
		\item takes steps along the equivalences of type (1), and (2).
	\end{itemize}
	
	We prove a more general claim by induction. Fix $k = 0, \ldots, n$, and values for $x_1, y_1, \ldots, x_k, y_k$.
	Then $\qbf(x_1, \ldots, y_n)$ is false if and only if there is a path on ground
	instances of the formula such that:
	\begin{itemize}
		\item it starts at $p(x_1,y_1,x_2,y_2,\ldots,x_k,y_k,0,?,0,?,\ldots,0,?,0)$ for some values of the question marks.
		\item it ends at $p(x_1,y_1,x_2,y_2,\ldots,x_k,y_k,1,?,1,?,\ldots,1,?,1)$ for some values of the question marks.
		\item takes steps along the equivalences of type (1), and (2) which do not change the first $2k$ positions $x_1,\ldots,y_k$.
	\end{itemize}
	
	We proceed by induction for $k = n, n-1, \ldots, 1, 0$. If $k = n$, only rule (1) s applicable, and it is applicable
	if and only if there is a clause in $\cnf$ which is false about $x_1, \ldots, y_n$, hence the claim follows by definition.
	
	Now, assume $k < n$, and that the claim is true for $k+1$. If $\qbf(x_1, \ldots, y_k)$ is false, then
	there exists independent universal responses $y_{k+1}$, and $z_{k+1}$ such that both
	$$
	\qbf(x_1, \ldots, y_k, 0, y_{k+1}),
	\qbf(x_1, \ldots, y_k, 1, z_{k+1})
	$$
	are false as well.
	
	By induction assumption, there are paths
	$$
	p(x_1, \ldots, y_k, 0, y_{k+1}, 0, y_{k+1}, \ldots, 0, y_n, 0) \leftrightarrow \cdots
	$$$$
	\leftrightarrow p(x_1, \ldots, y_k, 0, y_{k+1}, 1, y'_{k+1}, \ldots, 1, y'_n, 1)
	$$$$
	\text{and}
	$$$$
	p(x_1, \ldots, y_k, 1, z_{k+1}, 0, z'_{k+1}, \ldots, 0, z'_n, 0) \leftrightarrow \cdots
	$$$$
	\leftrightarrow p(x_1, \ldots, y_k, 1, z_{k+1}, 1, z_{k+1}, \ldots, 1, z_n, 1)
	$$
	for some values of $y_i$, $y'_i$, $z'_i$, $z_i$ ($i > k$). These two paths can be connected with a single
	step (2).
	$$
	p(x_1, \ldots, y_k, 0, y_{k+1}, 1, y'_{k+1}, \ldots, 1, y'_n, 1) 
	$$$$
	\leftrightarrow
	$$$$
	p(x_1, \ldots, y_k, 1, z_{k+1}, 0, z'_{k+1}, \ldots, 0, z'_n, 0)
	$$
	
	For the opposite implication, we assume that there is a path, and want to show that the QBF is not true.
	The only rule that can change the value of $x_{k+1}$ is rule (2) with $k := (k+1)$.
	Without loss of generality, we may assume that it is applied only once in the path changing $x_{k+1}$
	from 0 to 1: if it occurs multiple times, both first and last steps of this kind must change $x_{k+1}$
	from 0 to 1, and we can delete all the steps in between.
	
	Now, the value of $y_{k+1}$ can change at most once as well -- at the moment of $x_{k+1}$ changing.
	We denote its first value as $y_{k+1}$, and its second value as $z_{k+1}$. By induction hypothesis,
	both
	$$
	\qbf(x_1, \ldots, y_k, 0, y_{k+1}), \quad \qbf(x_1, \ldots, y_k, 1, z_{k+1})
	$$
	are false, so as well $\qbf(x_1, \ldots, y_k)$ finishing the induction step, and the proof as well.
\end{proof}

\section{Vampire solving the translated modal QBF formulae}
\label{app:plot}

\begin{figure}[t]
  \centering
  \includegraphics[width=\linewidth]{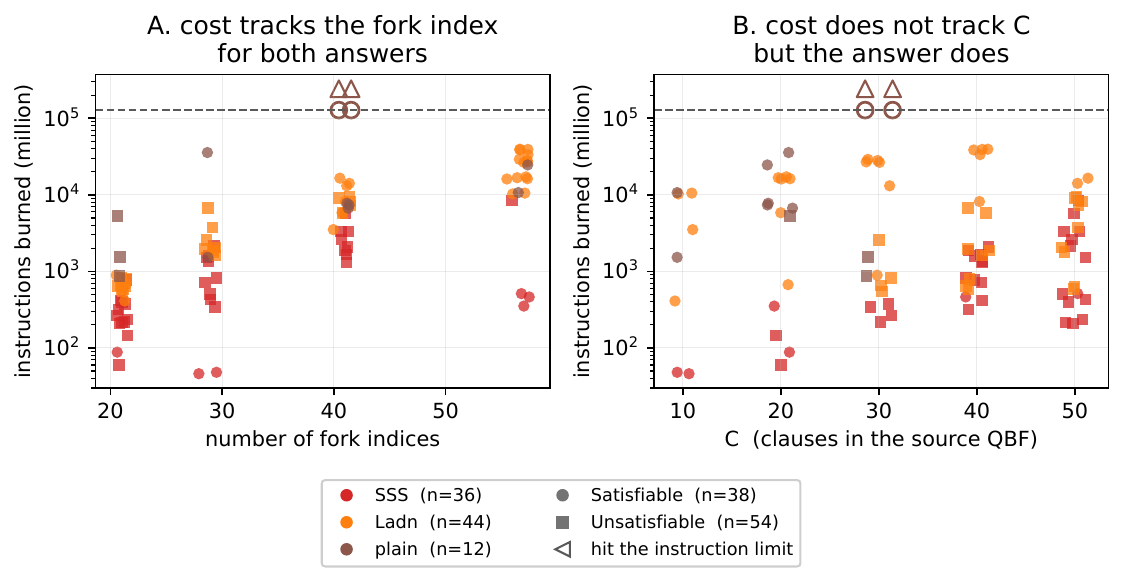}
  \caption{The $92$ optimised functional translations of modal QBF formulae,
    under Vampire's default strategy with an instruction limit of
    $1.28\cdot 10^{11}$. Colour distinguishes the three QBF-to-$K$
    translations, marker shape the answer. Left: cost against the number of
    fork indices. Right: the same problems against $C$. Hollow markers denote
    the two runs that reached the instruction limit, whose cost is therefore
    only a lower bound.}
  \label{fig:qbf}
\end{figure}

Fig.~\ref{fig:qbf} shows that on the optimised functional translations of modal QBF formulae, the number of
fork indices is a good predictor of the effort needed to solve the formula by
saturation (unlike $C$, the number of clauses in the source QBF in the TPTP
header).
The Spearman rank correlation between the number of fork indices and the logarithm of the instructions burned
remains clear when computed separately for the satisfiable and the unsatisfiable instances ($\rho = +0.59$
and $\rho = +0.77$, respectively), whereas $C$---which the generator varies
independently over $10$--$50$---governs satisfiability rather than cost
($\rho = +0.58$ with unsatisfiability, but only $-0.05$ with the logarithm of
the instructions burned).

\section{Memory consumption}\label{app:memory}

Over the $292$ of $308$ problems that Vampire's default strategy solved within
the budget of $1.28\cdot10^{11}$ instructions, the median peak memory was
$14$\,MB; $82\%$ of the runs stayed below $20$\,MB and $94\%$ below
$100$\,MB. Memory grows substantially more slowly than time: on the QBF
translations, the median instruction count rises by a factor of $47$ between
$21$ and $57$ fork indices while the median peak memory rises only by a factor
of $9.5$, and a direct fit over those $90$ problems gives
memory~$\approx$~instructions$^{0.50}$ ($R^2 = 0.85$).

The small footprint is a property of the searches that terminate. The $16$ runs
that reached the instruction limit had a median peak of $888$\,MB and a maximum
of $1235$\,MB, and these are lower bounds, the runs having been cut off. In
Figure~\ref{fig:memory} they are the hollow markers, which sit some two orders
of magnitude above the rest.

\begin{figure}[t]
  \centering
  \includegraphics[width=\linewidth]{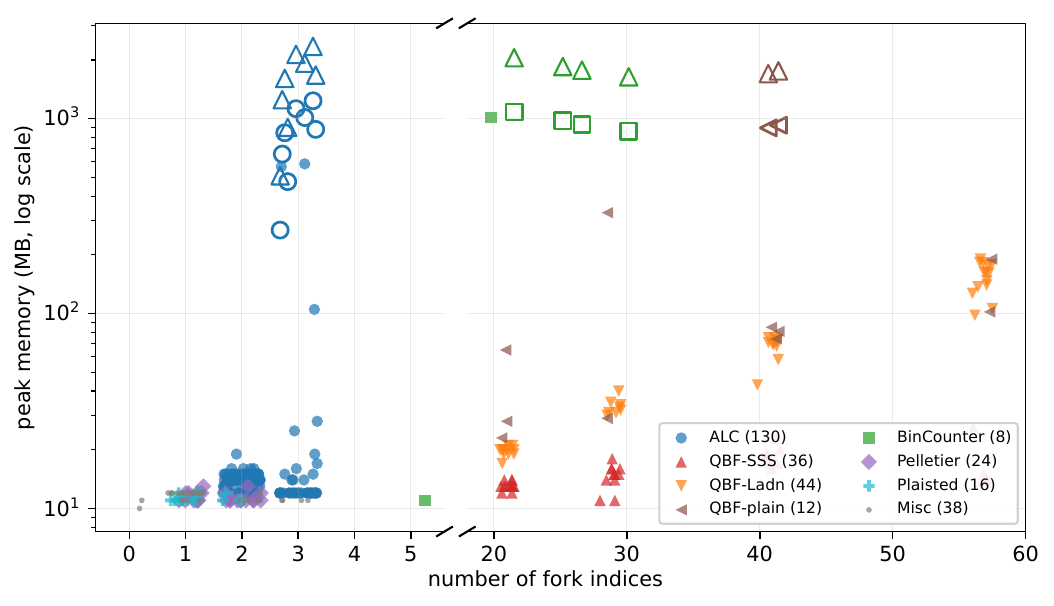}
  \caption{Peak memory against the number of fork indices, by problem family,
    for Vampire's default strategy under an instruction limit of
    $1.28\cdot10^{11}$. Hollow markers denote runs that reached the limit,
    whose memory is therefore only a lower bound. Note the broken horizontal
    axis.}
  \label{fig:memory}
\end{figure}

\fi

\end{document}